\newtheorem{theorem}{Theorem}%
\newtheorem{lemma}{Lemma}%
\newtheorem{invariant}{Invariant}%
\newtheorem{definition}{Definition}%
\newcommand{\tg}{\mathcal{G}\xspace}
\newcommand{\tge}{\mathcal{E}\xspace}
\begin{document}

\title{Inferring Tie Strength in Temporal Networks\thanks{A preliminary version of this paper was presented at the European Conference on Machine Learning and Principles and Practice of Knowledge Discovery in Databases 2022 ({ECMLPKDD} 2022)~\citep{oettershageninferring}.}
}

\author{{Lutz} {Oettershagen}\footnotetext{${}^\text{1}$This research is supported by the ERC Advanced Grant REBOUND (834862) and the EC H2020 RIA project SoBigData++ (871042).} \\KTH Royal Institute of Technology\\\texttt{lutzo@kth.se}\and 
{Athanasios L.} {Konstantinidis} \\Luiss University\\ \texttt{akonstantinidis@luiss.it}\and
{Giuseppe F.} {Italiano} \\Luiss University\\ \texttt{gitaliano@luiss.it}}

%
%\author*[2]{
%
%\author*{\fnm{Giuseppe F.} \sur{Italiano}${}^\text{3*}$\footnotetext{${}^\text{3}$Partially supported by MUR, the Italian Ministry for University and Research, under PRIN Project AHeAD (efficient Algorithms for HArnessing networked Data.}
%}\email{gitaliano@luiss.it}
%
%\affil[1]{\orgname{KTH Royal Institute of Technology}, \orgaddress{\city{Stockholm},\country{Sweden}}}
%
%\affil[2,3]{\orgname{Luiss University}, \orgaddress{\city{Rome}, \country{Italy}}}
	\maketitle

\begin{abstract}
Inferring tie strengths in social networks is an essential task in social network analysis. Common approaches classify the ties as \emph{weak} and \emph{strong} ties based on the \emph{strong triadic closure (STC)}. The STC states that if for three nodes, $A$, $B$, and $C$, there are strong ties between $A$ and $B$, as well as $A$ and $C$, there has to be a (weak or strong) tie between $B$ and $C$. A variant of the STC called STC+ allows adding a few new weak edges to obtain improved solutions. 
So far, most works discuss the STC or STC+ in static networks. However, modern large-scale social networks are usually highly dynamic, providing user contacts and communications as streams of edge updates.
\emph{Temporal networks} capture these dynamics.
To apply the STC to temporal networks, we first generalize the STC and introduce a weighted version such that  empirical a priori knowledge given in the form of edge weights is respected by the STC. Similarly, we introduce a generalized weighted version of the STC+.
The weighted STC is hard to compute, and our main contribution is an efficient 2-approximation (resp.~3-approximation) 
streaming algorithm for the weighted STC (resp.~STC+) in temporal networks.
As a technical contribution, we introduce a fully dynamic $k$-approximation for the minimum weighted vertex cover problem in hypergraphs with edges of size $k$, which is a crucial component of our streaming algorithms.
An empirical evaluation shows that the weighted STC leads to solutions that better capture the a priori knowledge given by the edge weights than the non-weighted STC.
Moreover, we show that our streaming algorithm efficiently approximates the weighted STC in real-world large-scale social networks.

\smallskip
\noindent
\textbf{Keywords:} Triadic Closure, Temporal Network, Tie Strength Inference
\end{abstract}

\section{Introduction}
Due to the explosive growth of online social networks and electronic communication, the automated inference of tie strengths is critical for many applications, e.g., advertisement, information dissemination, or understanding of complex human behavior~\citep{gilbert2008network,kahanda2009using}.
Users of large-scale social networks are commonly connected to hundreds or even thousands of other participants~\citep{kossinets2006empirical,mislove2007measurement}.
It is the typical case that these ties are not equally important. For example, in a social network, we can be connected with close friends as well as casual contacts.
Since a pioneering work of~\cite{granovetter1973strength}, the topic of tie strength inference has gained increasing attention fueled by the advent of online social networks and ubiquitous contact data.
Nowadays, ties strength inference in social networks is an extensively studied topic in the graph-mining community~\citep{GilbertK09,kahanda2009using,RozenshteinTG2017}.
A recent work by \cite{sintos2014using} introduced the \emph{strong triadic closure (STC)} property, where edges are classified as either \emph{strong} or \emph{weak}---for three persons with two \emph{strong} ties, there has to be a \emph{weak} or \emph{strong} third tie. Hence, if person $A$ is strongly connected to $B$, and $B$ is strongly connected to $C$, $A$ and $C$ are at least weakly connected.
The intuition is that if $A$ and $B$ are good friends, and $B$ and $C$ are good friends, $A$ and $C$ should at least know each other.

\begin{figure}[htb]\centering
\begin{subfigure}{0.49\linewidth}\centering
    \begin{tikzpicture}[scale=1]
        \node[style={}] (0,-1.5){};
        \begin{scope}[every node/.style={circle,thick,draw,minimum size=5mm,inner sep=0.5pt}]
            \node (a) at (0,2) {\emph{A}};
            \node (b) at (2,2) {\emph{B}};
            \node (c) at (3,0.5) {\emph{C}};
            \node (d) at (4,2) {\emph{D}};
        \end{scope}
        
        \begin{scope}
            \path [-] (a) edge[color=red,thick] node[above] {$10$} (b);
            \path [-] (b) edge[thick,dashed] node[above] {$1$} (d);
            \path [-] (b) edge[thick,dashed] node[left] {$1$} (c);
            \path [-] (c) edge[color=red,thick] node[right] {$2$} (d);
        \end{scope}
    \end{tikzpicture}
    \caption{Example for optimal weighted STC. The edge weights correspond to the number of communications.}
    \label{fig:examplea}
\end{subfigure}\hfill%
\begin{subfigure}{0.49\linewidth}\centering
    \begin{tikzpicture}[scale=1]
        \node[style={}] (0,-1.5){};
        \begin{scope}[every node/.style={circle,thick,draw,minimum size=5mm,inner sep=0.5pt}]
            \node (a) at (0,2) {\emph{A}};
            \node (b) at (2,2) {\emph{B}};
            \node (c) at (3,0.5) {\emph{C}};
            \node (d) at (4,2) {\emph{D}};
        \end{scope}
        
        \begin{scope}
            \path [-] (a) edge[thick,dashed] node[above,color=white] {0} (b);
            \path [-] (b) edge[color=red,thick] node[above] {} (d);
            \path [-] (b) edge[color=red,thick] node[left] {} (c);
            \path [-] (c) edge[color=red,thick] node[right] {} (d);
        \end{scope}
    \end{tikzpicture}
    \caption{Example for optimal non-weighted STC. Ignoring the weights leads to three strong edges.}
    \label{fig:exampleb}
\end{subfigure}
\caption{Example for the difference between weighted and non-weighted STC. Strong edges are highlighted in red, and weak edges are dashed.}
\label{fig:example}
\vspace*{-2mm}
\end{figure}
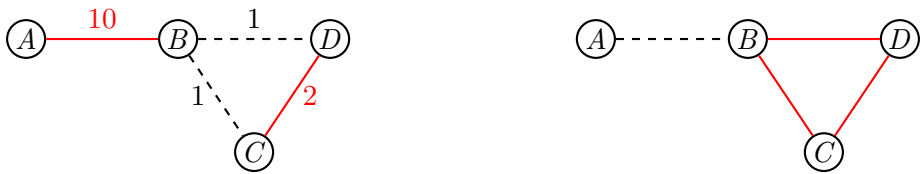

We first generalize the ideas of~\cite{sintos2014using} such that edge weights representing empirical tie strength are included in the computation of the STC. 
The idea is to consider edge weights that correspond to the empirical strength of the tie, e.g., the frequency or duration of communication between two persons. If this weight is high, we expect the tie to be strong, and we expect to be weak otherwise. However, we still want to fulfill the STC, and simple thresholding would not lead to correct results.
\Cref{fig:example} shows an example where we have a small social network consisting of four persons $A$, $B$, $C$, and $D$. In \Cref{fig:examplea}, the edge weights correspond to some empirical a priori information of the tie strength like contact frequency or duration, e.g., $A$ and $B$ chatted for ten hours and $B$ and $D$ for only one hour. The optimal weighted solution classifies the edges between $A$ and $B$ as well as between $C$ and $D$ as strong (highlighted in red). 
However, if we ignore the weights, as shown in \Cref{fig:exampleb}, the optimal (non-weighted) solution has three strong edges. 
Even though the non-weighted solution has more strong edges, the weighted version agrees more with our intuition and the empirical a priori knowledge. %

\cite{sintos2014using} also introduced a variant of the STC called STC+ that allows adding new weak edges to obtain improved solutions. Similarly to the standard variant, we introduce a weighted version of the STC+.

We employ these generalizations of the STC and the STC+ to infer the strength of ties between nodes in temporal networks.
A temporal network consists of a fixed set of vertices and a chronologically ordered stream of appearing and disappearing temporal edges, i.e., each temporal edge is only available at a specific discrete point in time~\citep{holme2012temporal,oettershagen2022temporalphd}.
Temporal networks can naturally be used as models for real-life scenarios, e.g., communication~\citep{candia2008uncovering,eckmann2004entropy}, contact~\citep{ciaperoni2020relevance,oettershagen2020classifying}, and social networks~\citep{hanneke2006discrete,holme2004structure,moinet2015burstiness}.
In contrast to static graphs, temporal networks are not simple in the sense that between each pair of nodes, there can be several temporal edges, each corresponding to, e.g., a contact or communication at a specific time~\citep{holme2012temporal,oettershagen2022temporalphd}.
Hence, there is no one-to-one mapping between edges and ties. 
Given a temporal network, we map it to a weighted static graph such that the edge weights are a function of the empirical tie strength. We then classify the edges using the weighted STC or STC+, respecting the a priori information given by the edge weights.

A major challenge is that the weighted STC and STC+ are hard to compute, and real-world temporal networks are often provided as large or possibly infinite streams of graph updates.  
To tackle this computational challenge, we employ a sliding time window approach and introduce a streaming algorithm that can efficiently update a 2-approximation of the minimum weighted STC, i.e., the problem that asks for the minimum number of weak edges.
In the case of the STC+, our streaming algorithm efficiently updates a 3-approximation of the minimum weighted STC+.

Note that we infer the tie strength based on the topology of the network. There are various studies on predicting the strength of ties given other features of a network.
\cite{GilbertK09} developed a predictive model to characterize ties in social networks as strong or weak with high accuracy by taking user similarities and interactions into account. 
In the same direction, \cite{XiangNR10} gave an unsupervised model to infer relationship strength based on user similarity and interaction activity.
Moreover, \cite{PhamSL16} showed that spatio-temporal features of social interactions could increase the accuracy of inferred ties strength. However, these works do not classify edges with respect to the STC. 
In contrast, our work is based on the STC property, which was introduced by \cite{granovetter1973strength}.
An extensive analysis of the STC can be found in the book of \cite{EasleyK2010}.
\cite{sintos2014using} not only introduced the optimization problem by characterizing the edges of the network as strong or weak using only the structure of the network, but they also proved that the problem of maximizing the strong edges is NP-hard, and provided two approximation algorithms to solve the dual problem of minimizing the weak edges.
In the following works, the authors of~\cite{GruttemeierK20,KonstantinidisNP18,KonstantinidisP20} focused on restricted networks to further explore the complexity of STC maximization.
\cite{RozenshteinTG2017} discuss the STC with additional community connectivity constraints.
\cite{adriaens2020relaxing} proposed integer linear programming formulations and corresponding relaxations.
Very recently, \cite{MatakosG22} proposed a new problem that uses the strong ties of the network to add new edges and increase its connectivity.
\cite{veldt2022correlation} presented connections between the  cluster editing problem and the STC+. In the cluster editing problem, given a undirected graph $G$, the task is to find the minimum number of edges that need to be introduced to or deleted from $G$ to obtain a disjoint union of cliques.
The author showed that an $\alpha$-approximation for STC+ leads to a $2\alpha$-approximation for the cluster editing problem.

The mentioned works only consider static networks and do not include edge weights in the computation of the STC or STC+. We propose weighted variants and use them to infer ties strength in temporal networks.

Even though temporal networks are a quite recent research field, there are some comprehensive surveys that introduce the notation, terminology, and applications~\citep{holme2012temporal,LatapyVM18,Michail16}.
Recently, there has been a growing interest within the knowledge discovery and data mining community in exploring the field of temporal networks, primarily due to their potential in providing deeper insights into the dynamics and evolution of complex systems, see e.g., \cite{rozenshtein2019mining,santoro2022onbra,oettershagen2022temporal,DBLP:conf/kdd/OettershagenKM23,oettershagen2022tglib}.
Additionally, there are systematic studies into the complexity of well-known graph problems on temporal networks (e.g.~\cite{HimmelMNS17,KempeKK02,ViardLM16}).
The problem of finding communities and clusters, which can be considered as a related problem, has been studied on temporal networks~\citep{ChenMSS18,TantipathananandhB11}. 
Furthermore, \cite{ZhouYR0Z18} studied dynamic network embedding based on a model of the triadic closure process, i.e., how open triads evolve into closed triads. 
\cite{HuangTWLf14} studied the formation of closed triads in dynamic networks.
The authors of~\cite{ahmadian2020theoretical} introduce a probabilistic model for dynamic
graphs based on the triadic closure.

Finally, \cite{wei_et_al} introduced a dynamic $(2 +\varepsilon)$-approximation for the minimum weight vertex cover problem with $\mathcal{O}(\log n / \varepsilon^2)$ amortized update time based on a vertex partitioning scheme~\citep{bhattacharya2018deterministic}.
However, the algorithm does not support updates of the vertex weights, which is an essential operation in our streaming algorithm.

\smallskip\noindent\textbf{Our Contributions:} 
\begin{enumerate}
\item We generalize the STC for weighted graphs and apply the weighted STC for determining tie strength in temporal networks. To this end, we use temporal information to infer the edge strengths of the underlying static graph.
Similarly, we generalize the STC+ variant for weighted graphs that allows insertions of new weak edges.
\item We provide a streaming algorithm framework to efficiently approximate the weighted STC and STC+ over time with an approximation factor of two and three, respectively. As a technical contribution, we propose an efficient dynamic $k$-approximation for the minimum weighted vertex cover problem (MWVC) in $k$-uniform hypergraphs, a key ingredient of our streaming framework. 
\item Our evaluation using real-world temporal networks shows that the weighted STC and STC+ lead to strong edges with higher weights consistent with the given empirical edge weights.
Furthermore, we show the efficiency of our streaming algorithm, which is orders of magnitude faster than the baseline while keeping the same solution quality.
\end{enumerate}

\smallskip
\noindent
In contrast to the conference version~\citep{oettershageninferring}, this work contains all previously omitted details and proofs. Furthermore, we include the following additions:
\begin{enumerate}
\item \textbf{Weighted STC+:} We additionally introduce the weighted STC+ problem and extend our discussion and solution for it.
\item \textbf{Dynamic $k$-Approximation for MWVC:} We lifted the fully dynamic 2-approximation of the minimum weight vertex cover to a fully dynamic $k$-approximation in $k$-uniform hypergraphs. 
\item \textbf{Extended experiments:} We provide additional experiments, including the evaluation of the newly introduced weighted STC+. 
\end{enumerate}

\medskip
\noindent
The remainder of this paper is organized as follows. In \Cref{sec:preliminaries}, we introduce the preliminaries and definitions. \Cref{sec:weightedSTC} presents the generalization of the STC and the STC+ for weighted networks. Next, in \Cref{sec:temporalstc}, we discuss the application of the weighted STC and STC+ for edge strength inference in temporal networks. Furthermore, we introduce our new streaming algorithm, including our dynamic minimum weight vertex cover algorithm.
In \Cref{sec:experiments}, we evaluate  our new approaches, and finally, in \Cref{sec:conclusion}, we provide some concluding remarks and discuss possible directions for future work.

\section{Preliminaries}\label{sec:preliminaries}
An undirected \emph{hypergraph} $H=(V,E)$ consists of a finite set of nodes $V$ and a finite set of \emph{hyperedges} $E\subseteq 2^V\setminus\emptyset$, i.e., each hyperedge connects a non-empty subset of $V$. 
A hypergraph with all its hyperedges of size $k$ is called $k$-uniform hypergraph. 
In particular, we consider the two special cases $k\in\{2,3\}$. %
In the first case, the hypergraph is a conventional undirected graph which we may denote with $G$ and for which we call the hyperedges just edges. 
In the second case, each hyperedge is an unordered triple. 
We use $V(H)$ and $E(H)$ to denote the sets of nodes and hyperedges, respectively, of $H$. 
The set $\delta(v)=\{e\mid e\in E(H), v\in e\}$ contains all hyperedges incident to $v\in V(H)$, and we use $d(v)=|\delta(v)|$ to denote the degree of $v\in V$.
An \emph{edge-weighted} undirected hypergraph $H=(V,E,w_E)$ is an undirected hypergraph with additional weight function $w_E:E\rightarrow\mathbb{R}$.
Analogously, we define a \emph{vertex-weighted} undirected hypergraph $H=(V,E,w_V)$ with a weight function for the vertices $w_V:V\rightarrow\mathbb{R}$. If the context is clear, we omit the subscript of the weight function.

For $2$-uniform hypergraphs, i.e., undirected graphs, we define \emph{wedges}.
A \emph{wedge} is defined as a triplet of nodes $u,v,w \in V$ such that $\{\{u,v\},\{v,w\}\}\subseteq E$ and $\{u,w\}\notin E$. We denote such a wedge by $(v,\{u,w\})$, and with $\mathcal{W}(G)$, the set of wedges in a graph $G$.
Next, we define the \emph{weighted wedge graph}. 
The non-weighted version is also known as the \emph{Gallai} graph~\citep{gallai1967transitiv}.
\begin{definition}
Let $G=(V,E,w_E)$ be an edge-weighted graph.
The \emph{weighted wedge graph} $W(G)=(U,H,w_V)$ consists of the vertex set $U=\{n_{uv}\mid \{u,v\}\in E \}$, the edges set $H=\{ \{n_{uv},n_{vw}\} \mid (v,\{u,w\})\in \mathcal{W}(G) \}$, and the vertex weight function $w_V(n_{uv})=w_E(\{u,v\})$.
\end{definition}
\noindent
\textbf{Temporal Networks~}
A \emph{temporal network} $\tg=(V, \tge)$ consists of a finite set of nodes $V$, a possibly infinite set $\tge$ of undirected \emph{temporal edges} $e=(\{u,v\},t)$ with $u$ and $v$ in $V$, $u\neq v$, and \emph{availability time} (or \emph{timestamp}) $t \in \mathbb{N}$.  For ease of notation, we may denote temporal edges $(\{u,v\},t)$ with $(u,v,t)$. We use $t(e)$ to denote the availability time of $e$. We do not include a duration in the definition of temporal edges, but our approaches can easily be adapted for temporal edges with duration parameters.
We define the underlying static, weighted, \emph{aggregated} graph $A_\phi(\tg)=(V,E,w)$ of a temporal network $\tg=(V,\tge)$ with the edges set $E=\{ \{u,v\} \mid (\{u,v\}, t)\in \tge  \}$ and edge weight function $w:E\rightarrow \mathbb{R}$. 
The edge weights are given by the function $\phi:2^\tge\rightarrow \mathbb{R}$ such that $w(e)=\phi(\mathcal{F}_e)$ with $\mathcal{F}_e=\{e\mid (e,t)\in\tge \}$.
We discuss various weighting functions in \Cref{sec:weighting_functions}.
Finally, we denote the lifetime of a temporal network $\tg=(V,\tge)$ with $T(\tg)=[t_{min},t_{max}]$ with $t_{min}=\min\{t\mid e=(u,v,t)\in \tge\}$ and $t_{max}=\max\{t\mid e=(u,v,t)\in \tge\}$.

\smallskip
\noindent
\textbf{Strong Triadic Closure~}
Given a (static) graph $G = (V,E)$, we can assign one of the labels \emph{weak} or \emph{strong} to each edge in $e\in E$. We call such a labeling a \emph{strong-weak labeling}, and we specify the labeling by a subset $S\subseteq E$. 
Each edge $e\in S$ is called \emph{strong}, and $e\in E\setminus S$ \emph{weak}.
The \emph{strong triadic closure (STC)} of a graph $G$ is a strong-weak labeling $S\subseteq E$ such that
for any two strong edges $\{u,v\}\in S$ and $\{v,w\}\in S$, there is a (weak or strong) edge $\{u,w\}\in E$.
We say that such a labeling \emph{fulfills} the strong triadic closure.
In other words, in a strong triadic closure, there is no pair of strong edges $\{u,v\}$ and $\{v,w\}$ such that $\{u,w\} \notin E$.
Consequently, a labeling $S\subseteq E$ fulfills the STC if and only if at most one edge of any wedge in $\mathcal{W}(G)$ is in $S$, i.e., there is no wedge with two strong edges~\citep{sintos2014using}.
The decision problem for the STC is denoted by \textsc{MaxSTC} and is stated as follows:
Given a graph $G = (V,E)$ and a non-negative integer $k$.
Does there exist $S\subseteq E$ that fulfills the strong triadic closure and $|S|\geq k$?

Equivalently, we can define the problem based on the weak edges, \textsc{MinSTC}, in which given a graph $G = (V,E)$ and a non-negative integer $\ell$.
Does there exist $E'\subseteq E$ that $E\setminus E'$ fulfills the strong triadic closure and $|E'|\leq \ell$?

\smallskip
\noindent
\textbf{Strong Triadic Closure with Edge Additions~}
Here apart from labeling the edges of the graph as strong or weak, we can add new edges between non-adjacent nodes and label these edges as weak. We denote this problem by \textsc{MinSTC+} and it is stated as follows:
Given a graph $G = (V,E)$ and a non-negative integer $\ell$. 
Does there exist a set $F\subseteq {V \choose 2}\setminus E$ such that there is a $E'\subseteq E$ that $E\setminus E'$ fulfills the strong triadic closure and $|E' \cup F|\leq \ell$?

The motivation for considering edge additions is the following~\citep{sintos2014using}:
Let $G$ be an (almost) complete graph on $n$ vertices with exactly one edge $\{u,v\}$ missing. In this case, the best strong-weak labeling for $G$ contains $n-2$ weak edges. By adding the single edge $\{u,v\}$ to $G$, we obtain the complete graph over $n$ vertices for which all edges can be labeled strong. 
Hence, by adding a single edge, we are able to improve our labeling hugely. 

\subsection{Examples of Temporal, Aggregated, and Wedge Graph}
\Cref{fig:tgexample} shows from left to right (i) a temporal graph $\tg$ and (ii) its aggregated graph $A_\phi(\tg)$ with $\phi$ being the weighting function according to contact frequency. Furthermore, (iii) the wedge graph $W(A(\tg))$ with highlighted minimum weight vertex cover. The blue nodes are the vertices in the cover. Finally, (iv) the corresponding minimum weight STC labeling of the edges in $A(\tg)$ is shown. The red edges are strong, and the blue dashed ones are weak. 

\begin{figure}[htb]
    \centering
    \includegraphics{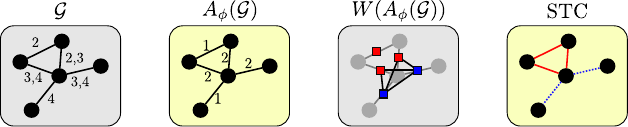}
    \caption{Example for temporal graph $\tg$, the aggregated graph $A_\phi(\tg)$, the wedge graph $W(A_\phi(\tg))$, and minimum weight STC.}
    \label{fig:tgexample}
\end{figure}

\section{Weighted Strong Triadic Closure}\label{sec:weightedSTC}

Let $G=(V,E,w)$ be a graph with edge weights reflecting the importance of the edges. We determine a \emph{weighted} strong triadic closure that takes into account the weights of the edges by the importance given by $w$. To this end, let $S\subseteq E$ be a strong-weak labeling. The labeling $S$ fulfills the weighted STC if (1) for any two strong edges $\{u,v\},\{v,w\}\in S$ there is a (weak or strong) edge $\{u,w\}\in E$, i.e., fulfills the unweighted STC, and (2) maximizes $\sum_{e\in S}w(e)$.

The corresponding decision problem \textsc{WeightedMaxSTC} has as input a graph $G=(V,E)$ and $U\in \mathbb{R}$, and asks for the existence of a strong-weak labeling that fulfills the strong triadic closure and for which $\sum_{e\in S}w(e)\geq U$.
\cite{sintos2014using} showed that \textsc{MaxSTC} is NP-complete using a reduction from Maximum Clique.
The reduction implies that we cannot approximate the \textsc{MaxSTC} with a factor better than $\mathcal{O}(n^{1-\epsilon})$. 
Because \textsc{MaxSTC} is a special case of \textsc{WeightedMaxSTC}, these negative results also hold for the latter. 

Instead of maximizing the weight of strong edges, we can equivalently minimize the weight of weak edges resulting in the corresponding problem \textsc{WeightedMinSTC}\footnote{We use \textsc{WeightedMinSTC} for the decision and the optimization problem in the following if the context is clear.}. 
Here, we search a strong-weak labeling that fulfills the STC and minimizes the weight of the edges not in $S$.
Both the maximation and the minimization problems can be solved exactly using integer linear programming (ILP).
We provide the corresponding ILP formulations in \Cref{appendix:ilp}.
The advantage of \textsc{WeightedMinSTC} is that we can obtain a 2-approximation.

To approximate \textsc{WeightedMinSTC} in an edge-weighted graph $G=(V,E,w)$, we first construct the weighted wedge graph $W(G)=(V_W,E_W, w_{V_W})$.
Solving the minimum weighted vertex cover (MWVC) problem on $W(G)$ leads then to a solution for the minimum weighted STC of $G$, where MWVC is defined as follows. Given a vertex-weighted graph $G=(V, E, w)$, the minimum weighted vertex cover asks if there exists a subset of the vertices $C\subseteq V$ such that each edge $e\in E$ is incident to a vertex $v\in C$ and the sum $\sum_{v\in C}w(v)$ is minimal.
\begin{lemma}\label{lemma:mwvcstc}
    Solving the MWVC on $W(G)$ leads to a solution of the minimum weight STC on $G$.
\end{lemma}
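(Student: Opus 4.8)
The plan is to exhibit a weight-preserving bijection between the feasible solutions of \textsc{WeightedMinSTC} on $G$ and the vertex covers of $W(G)$, and then to argue that this bijection sends optimal solutions to optimal solutions. The whole statement then follows by complementation.

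First I would set up the correspondence dictated by the construction of $W(G)$. By definition the vertices $U$ of $W(G)$ are in one-to-one correspondence with the edges of $G$ via $\{u,v\} \mapsto n_{uv}$, and the edges $H$ of $W(G)$ are in one-to-one correspondence with the wedges $\mathcal{W}(G)$ via $(v,\{u,w\}) \mapsto \{n_{uv}, n_{vw}\}$. I would note that this latter map is well-defined and injective because two distinct edges of $G$ share at most one endpoint, so each adjacent pair of edges induces at most one wedge, and $H$ contains exactly those pairs whose ``third'' edge $\{u,w\}$ is absent. Given a strong-weak labeling $S \subseteq E$, I would then define the associated vertex set $C_S = \{\, n_e \mid e \in E \setminus S \,\} \subseteq U$ consisting of the vertices corresponding to the weak edges.

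Second I would establish the feasibility equivalence: $S$ fulfills the STC if and only if $C_S$ is a vertex cover of $W(G)$. Recall, as stated in the preliminaries, that $S$ fulfills the STC precisely when no wedge has both of its edges in $S$, i.e., every wedge contains at least one weak edge. Reading this through the correspondence, it says that for every edge $\{n_{uv}, n_{vw}\} \in H$ at least one endpoint lies in $C_S$, which is exactly the defining property of a vertex cover; the two directions follow by comparing the wedge condition and the covering condition against each other. Third I would check that the objective values agree: since $w_V(n_e) = w_E(e)$ for every edge $e$, we have $\sum_{v \in C_S} w_V(v) = \sum_{e \in E \setminus S} w_E(e)$, so the weight of the cover $C_S$ equals the total weight of the weak edges of $S$. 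As $S \mapsto C_S$ is a bijection from STC-feasible labelings onto vertex covers of $W(G)$ (its inverse sends $C$ to the labeling $\{\, e \in E \mid n_e \notin C \,\}$) that preserves the objective, a minimum weight vertex cover $C$ yields a labeling minimizing the weight of the weak edges, i.e., an optimal solution of \textsc{WeightedMinSTC}.

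The argument is essentially bookkeeping, so I do not expect a deep obstacle; the only point requiring genuine care is confirming that the wedge-to-edge correspondence is a bona fide bijection, so that no wedge is double-counted and no edge of $W(G)$ arises from a non-wedge. This hinges precisely on the facts that adjacent edges of $G$ determine a unique candidate wedge and that, by construction, $H$ retains exactly the pairs $\{n_{uv},n_{vw}\}$ for which $\{u,w\}\notin E$. Once this is pinned down, the feasibility and weight equivalences are immediate and the lemma follows.
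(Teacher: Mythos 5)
Your proposal is correct and follows essentially the same route as the paper's proof: the weak-edges-to-cover-vertices correspondence (wedges of $G$ being exactly the edges of $W(G)$) plus the observation that $w_V(n_e)=w_E(e)$ makes the objectives coincide. The only difference is that the paper cites the unweighted correspondence from Sintos and Tsaparas and sketches it briefly, whereas you spell out the bijection and both directions of the feasibility equivalence explicitly.
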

\begin{proof}
    It is known that a (non-weighted) vertex cover $C\subseteq V(W)$ in $W(G)$ is in one-to-one correspondence to a (non-weighted) STC in $G$, see~\cite{sintos2014using}.
    The idea is the following. 
    Recall that the wedge graph $W(G)$ contains for each edge $\{i,j\}\in E(G)$ one vertex $n_{ij}\in V(W)$. Two vertices $n_{uv}, n_{uw}$ in $W(G)$ are only adjacent if there exists a wedge $(u,\{v,w\})\in \mathcal{W}(G)$. 
    If we choose the weak edges $E\setminus S$ to be the edges $\{i,j\}\in E$ such that $n_{ij}\in C$, each wedge has at least one weak edge.
    Now for the weighted case, by definition, the weight of the STC $\sum_{e\in E\setminus S}w(e)$ equals the weight of a minimum vertex cover $\sum_{v\in C}w_{V_W}(v)$. 
\end{proof}

\Cref{lemma:mwvcstc} implies that an approximation for \textsc{MWVC} yields an approximation for \textsc{WeightedMinSTC}.
A well-known 2-approximation for the MWVC is the pricing method which is based on the primal-dual method and was first suggested in~\cite{bar1981linear}.
The idea of the pricing algorithm is to assign to each edge $e\in E$ a {price} $p(e)$ initialized with zero. We say a vertex is \emph{tight} if the sum of the prices of its incident edges equals the weight of the vertex. 
We iterate over the edges, and if for $e=\{u,v\}$ both $u$ and $v$ are not tight, 
we increase the price of $p(e)$ until at least one of $u$ or $v$ is tight. 
Finally, the vertex cover is the set of tight vertices.
See, e.g.,~\cite{kleinberg2006algorithm} for a detailed introduction.
In \Cref{sec:stcovertime}, we generalize the pricing algorithm for fully dynamic updates of edge insertions and deletions and vertex weight updates.

\subsection{Weighted STC+ Variant}\label{sec:stcplus}
As above, let $G=(V,E,w)$ be a graph with edge weights reflecting the importance of the edges. We consider the weighted STC+ variant that allows insertions of weak edges called \textsc{WeightedMinSTC+}.
Our goal is to find a subset of new edges $E_N\subseteq {V \choose 2}\setminus E$ and a strong-weak labeling $S$ such that 
(1) $S$ fulfills the STC in $G=(V, E\cup E_N)$ and 
(2) the weight $\sum_{e\in E_N\cup (E\setminus S)}w(e)$ is minimized.
The new edges in $E_N$ are considered to be weak; thus, no new open wedges that violate strong triadic closure are introduced.
We still need to assign the weight $w(f)$ to the new edges $f\in E_N$. 
Each edge $f=\{u,w\}$ can be part of a set of wedges $\mathcal{W}_f=\{(v,\{u,w\})\mid v\in V\}\subseteq \mathcal{W}(G)$.
We set the weight of $f=\{u,w\}$ to
\[
w(f)=\alpha \cdot \sum_{(v,\{u,w\})\in \mathcal{W}_f} \frac{w(\{v,u\})+w(\{v,w\})}{|\mathcal{W}_f|}\text{,}
\]
with the parameter $\alpha\in\mathbb{R}_{>0}$, which can be used to allow more or fewer new edges. Note that for unit weights, i.e., in the unweighted STC+ case, we obtain the weight of $w(f)=1$ using  $\alpha = \frac{1}{2}$. Furthermore, for a large enough $\alpha$, e.g., $\alpha\geq \sum_{(v,\{u,w\})\in \mathcal{W}}(w(\{v,u\})+w(\{v,w\}))$, the the solution of the weighted STC+ does not contain any edge $f\in E_N$ and is a solution for the \textsc{WeightedMinSTC} problem.

Similarly to \textsc{WeightedMinSTC}, we provide an approximation based on the minimum vertex cover problem. To this end, we construct a \emph{vertex-weighted hypergraph} $H_W=(V_W,E_W,w_{V_W})$ that contains a hyperedge for each wedge in $G=(V,E,w_E)$.
Particularly, we define 
\begin{align*}
    V_W&=\{n_{uv}\mid \{u,v\}\in E \}\cup \{n'_{xy}\mid \{x,y\}\in E_N\subseteq \textstyle{V \choose 2} \setminus E \}\text{, and}\\
    E_W&=\{ \{n_{uv},n_{vw},n'_{uw}\} \mid (v,\{u,w\})\in \mathcal{W}(G) \}.
\end{align*}
and the vertex weight function $w_V(n_{uv})=w(\{u,v\})$.
Now, a minimum weight vertex cover in $H_W$ leads to an optimal solution of \textsc{WeightedMinSTC+}. 
Let $C\subseteq V_W$ be a vertex cover in $H_W$, i.e., for each hyperedge $e\in E_W$, there exists a vertex $n_{uv}\in e$ with $n_{uv}$ also in $C$, and the corresponding edge $\{u,v\}\in E\cup E_N$ is labeled weak. Consequently, in case $\{u,v\}\in E$, the to $e$ corresponding wedge has a weak edge. And if $\{u,v\}\in E_N$, the to $e$ corresponding wedge is closed by the new edge $\{u,v\}$. Hence, $C$ leads to a valid labeling for $G$. Because $C$ has minimum weight, the corresponding labeling is optimal.

We can obtain a 3-approximation algorithm using the pricing method analogously to the approximation for \textsc{WeightedMinSTC}. 
Note that this improves the $\mathcal{O}(\log n)$ approximation stated in \cite{sintos2014using}.

\subsection{ILP Formulations}\label{appendix:ilp}
For the exact computation of the weighted STC, we provide an ILP formulation similar to the  formulation of the unweighted version of the maximum STC problem~\citep{adriaens2020relaxing}.
We use binary variables $x_{ij}$ which encode the if edge $\{i,j\}\in E$ is strong $(x_{ij}=1)$ or weak $(x_{ij}=0)$.
Moreover, $w_{ij}$ is the weight of edge $\{i,j\}$.
For \textsc{MaxWeightSTC}, we have
\begin{align}
    &\max \sum_{ij\in E}w_{ij}x_{ij}\\
    &\text{s.t.} \quad x_{ij} + x_{ik} \leq 1 \quad \text{for all }(i,\{j,k\})\in \mathcal W(G)\\
    &\quad\quad x_{ij}\in\{0,1\}\quad \text{for all }\{i,j\} \in E.
\end{align}
Similarly, we define the minimization version for \textsc{MinWeightSTC}, where we use binary variables $y_{ij}$ which encode the if edge $\{i,j\}\in E$ is strong $(y_{ij}=0)$ or weak $(y_{ij}=1)$, as
\begin{align}
    &\min \sum_{ij\in E}w_{ij}y_{ij}\\
    &\text{s.t.} \quad y_{ij} + y_{ik} \geq 1 \quad \text{for all }(i,\{j,k\})\in \mathcal W(G)\\
    &\quad\quad y_{ij}\in\{0,1\}\quad \text{for all }\{i,j\} \in E.
\end{align}

Finally, in the case of the \textsc{WeightedMinSTC+} problem, we adapt the ILP formulation stated for the unweighted version of the minimum STC+ problem~\citep{veldt2022correlation}. We use binary variables $y_{ij}$ which encode the if (possibly new) edge $\{i,j\}\in E\cup E_N$ is strong $(y_{ij}=0)$ or weak $(y_{ij}=1)$.
Hence, for \textsc{WeightedMinSTC+}, we have
\begin{align}
    &\min \sum_{ij\in {V \choose 2}}w_{ij}y_{ij}\\
    &\text{s.t.} \quad y_{ij} + y_{ik} + y_{jk} \geq 1 \quad \text{for all }(i,\{j,k\})\in \mathcal W(G)\\
    &\quad\quad y_{ij}\in\{0,1\}\quad \text{for all }\{i,j\} \in \textstyle{V \choose 2}.
\end{align}

\section{Strong Triadic Closure in Temporal Networks}\label{sec:temporalstc}
We first present meaningful weighting functions to obtain an edge-weighted aggregated graph from the temporal network.
Next, we discuss the approximation of the \textsc{WeightedMinSTC} and the \textsc{WeightedMinSTC+} in the non-streaming case. 
Finally, we introduce the approximation streaming algorithms for temporal networks.

\subsection{Weighting Functions for the Aggregated Graph}\label{sec:weighting_functions}
A key step in the computation of the STC for temporal networks is the aggregation and weighting of the temporal network to obtain a weighted static network.
Recall that the weighting of the aggregated graph $A_\phi(\tg)$ is determined by the weighting function $\phi:2^\tge\rightarrow \mathbb{R}$ such that $w(e)=\phi(\mathcal{F}_e)$ with $\mathcal{F}_e=\{e\mid (e,t)\in\tge \}$. 
Naturally, the weighting function $\phi$ needs to be meaningful in terms of tie strength; hence, we propose the following variants:
\begin{itemize}
    \item \emph{Contact frequency:} We set $\phi(\mathcal{F}_e)=|\mathcal{F}_e|$, i.e., the weight $w(e)$ of edge $e$ in the aggregated graph equals the number of temporal edges between the endpoints of $e$.
    Contact frequency is a popular and common substitute for tie strength~\citep{GilbertK09,granovetter1973strength,lin1978analyzing}.
    \item \emph{Exponential decay:} \cite{lin1978analyzing}  proposed to measure tie strength in terms of the recency of contacts. 
    We propose the following weighting variant to capture this property where $\phi(\mathcal{F}_e)=\sum_{i=1}^{|\mathcal{F}_e|-1}e^{-(t(e_{i+1})-t(e_i))}$ if $|\mathcal{F}_e|\geq 2$ and else $\phi(\mathcal{F}_e)=0$. Here, we interpret $\mathcal{F}_e$ as a chronologically ordered sequence of the edges. 
    \item \emph{Duration:} Temporal networks can include durations as an additional parameter of the temporal edges, i.e., each temporal edge $e$ has an assigned value $\lambda(e)\in\mathbb{N}$ that describes, e.g., the duration of a contact~\citep{holme2012temporal}. The duration is also commonly used as an indicator for tie strength~\citep{GilbertK09}.
    We can define $\phi$ in terms of the duration, e.g., $\phi(\mathcal{F}_e)=\sum_{f\in\mathcal{F}_e}\lambda(f)$.
\end{itemize}
Other weighting functions are possible, e.g., combinations of the ones above or weighting functions that include node feature similarities.

\subsection{Approximation of \textsc{WeightedMinSTC} and \textsc{WeightedMinSTC+}}
Before introducing our streaming algorithm, we discuss how to compute and approximate the \textsc{WeightedMinSTC} and \textsc{WeightedMinSTC+} in a temporal network $\tg=(V,\tge)$ in the non-streaming case. Consider the following algorithm for the \textsc{WeightedMinSTC} case:
\begin{enumerate}
    \item Compute $A_\phi(\tg)=(V,E, w)$ using an appropriate weighting function $\phi$.
    \item Compute the vertex-weighted wedge graph $W(A_\phi(\tg))=(V_W, E_W, w_{V_W})$. %
    \item Compute an MWVC $C$ on $W(A_\phi(\tg))$.
\end{enumerate}
The nodes $n_{uv}$ in $C$ then correspond to the weak ties $\{u,v\}$ in $\tg$.
For the \textsc{WeightedMinSTC+}, we can compute in step 2 the vertex-weighted wedge hypergraph as described in~\Cref{sec:stcplus}.
Depending on how we solve step three, we can either compute an optimal or approximate solution, e.g., using the pricing approximation for the MWVC, we obtain a 2-approximation for \textsc{WeightedMinSTC} or a 3-approximation for \textsc{WeightedMinSTC+}, respectively.
Using the pricing approximation, we have a linear running time in the number of (hyper-)edges in the wedge graph. 
The problem with this direct approach of computing a solution for the \textsc{WeightedMinSTC} or \textsc{WeightedMinSTC+} problem is its limited scalability.
The reason is that the number of vertices in the wedge graph $|V_W|\geq|E(A_\phi(\tg))|$ and the number of edges equals the number of wedges in $A$, which is bounded by $\mathcal{O}(|V|^3)$, see~\cite{pyatkin2019maximum}, leading to a total running time and space complexity of $\mathcal{O}(|V|^3)$.

\subsection{Streaming Framework for \textsc{WeightedMinSTC} and \textsc{WeightedMinSTC+}}\label{sec:stcovertime}
In the previous section, we saw that the size of the wedge graph could render the direct approximation approach infeasible for large temporal networks.
To overcome this obstacle, we use a sliding time window of size $\Delta\in\mathbb{N}$ to compute the changing STC or STC+ for each time window.
The advantage is two-fold: (1) By considering limited time windows, the size of the wedge graphs for which we have to compute the MWVC is reduced because, usually, not all participants in a network have contact in the same time window. 
(2) If we consider temporal networks spanning a long (possibly infinite) time range, the relationships, and thus, tie strengths, between participants change over time. Using the sliding time window approach, we can capture such changes.

In the following discussion, we first explain our streaming algorithm for the STC problem, and later in \Cref{sec:stcplusstreaming}, we describe the necessary adaptions for the STC+ variant.
Moreover, we assume the weighting function $\phi$ to be linear in the contact frequency, and we omit the subscript. But, our results are general and can be applied to other weighting functions.
Let $\tau$ be a time interval and let $A(\tg(\tau))$ be the aggregated graph of $\tg(\tau)$, i.e., the temporal network that only contains edges starting and arriving during the interval $\tau$.
For a time window size of $\Delta\in\mathbb{N}$, we define the sliding time window $\tau_t$ at timestamp $t$ with $t \in [1, T(\tg) - \Delta+1]$ as $\tau_t = [t, t + \Delta-1]$. 

\begin{figure}[t]
    \centering
    \includegraphics[width=1.0\linewidth]{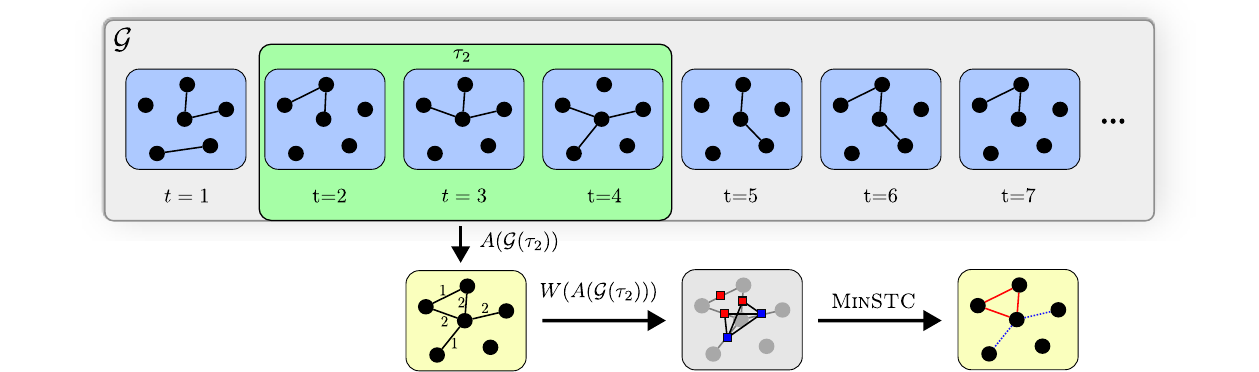}
    \caption{Example for computing the weighted STC of a sliding time window.}
    \label{fig:streaming}
\end{figure}

\Cref{fig:streaming} shows an example of our streaming approach for $\Delta=3$.
The first seven timestamps of temporal network $\tg$ are shown as static slices.
The time window $\tau_2$ of size three starts at $t=2$. First, the static graph $A(\tg(\tau_2))$ is aggregated, and the wedge graph $W(A(\tg(\tau_2)))$ is constructed. The wedge graph is used to compute the weighted STC. After this, the time window is moved one time stamp further, i.e., it starts at $t=3$ and ends at $t=5$, and the aggregation and STC computation are repeated (not shown in \Cref{fig:streaming}).
In the following, we describe how the aggregated and wedge graphs are updated when the time window is moved forward, how the MWVC is updated for the changes of the wedge graph, and how the final streaming algorithm proceeds.

\subsubsection{Updating the aggregated and wedge graphs}\label{sec:streaming}
Let $\tau_{t_1}$ and $\tau_{t_2}$ be to consecutive time windows, i.e., $t_2=t_1+1$. Furthermore, let $A_i=A(\tg(\tau_{t_i}))$ and $W_i=W(A(\tg(\tau_{t_i})))$ with $i\in\{1,2\}$ be the corresponding aggregated and wedge graphs.
The sets of edges appearing in the time windows $\tg(\tau_{t_1})$ and $\tg(\tau_{t_2})$ might differ. For each temporal edge that is in $\tg(\tau_{t_1})$ but not in $\tg(\tau_{t_2})$, we reduce the weight of the corresponding edge in the aggregated graph $A_1$. If the weight reaches zero, we delete the edge from $A_1$. Analogously, for each temporal edge that is in $\tg(\tau_{t_2})$ but not in $\tg(\tau_{t_1})$, we increase the weight of the corresponding edge in $A_1$. If the edge is missing, we insert it. This way, we obtain $A_2$ from $A_1$ by a sequence of update operations. 
Now, we map these edge removals, additions, and edge weight changes between $A_1$ and $A_2$ to updates on $W_1$ to obtain $W_2$. 
For each edge removal (addition) $e=\{u,v\}$ between $A_1$ and $A_2$, we remove (add) the corresponding vertex (and incident edges) in $W_1$. We also have to add or remove edges in $W_1$ depending on newly created or removed wedges. More precisely, for every new wedge in $A_1$, we add an edge between the corresponding vertices in $W_1$, and for each removed wedge, i.e., by deleting an edge or creating a new triangle, in $A_1$, we remove the edges between the corresponding vertices in $W_1$.
Furthermore, for each edge weight change between $A_1$ and $A_2$, we decrease (increase) the weight of the corresponding vertex in $W_1$. 
Hence, the wedge graph $W_1$ is edited by a sequence $\sigma$ of vertex and edge insertions, vertex and edge removals, and weight changes to obtain $W_2$.  
Because we only need to insert or remove a vertex in the wedge graph $W_1$ if the degree changes between zero and a positive value, we do not consider vertex insertion and removal in $W_1$ as separate operations in the following.
The number of vertices and edges in $W_1$ is bounded by the current numbers of edges and wedges in $A_1$. Furthermore, we bound the number of changes in $W_1$ after inserting or deleting edges from $A_1$.

\begin{lemma}\label{lemma:numberedges}
    The number of new edges in $W_1$ after inserting $\{v,w\}$ into $A_1$ is at most $d(v)+d(w)$, and the number of edges removed from $W$ is at most $\min(d(v),d(w))$.
    The number of new edges in $W_1$ after removing $\{v,w\}$ from $A_1$ is at most $\min(d(v),d(w))$, and the number of edges removed from $W_1$ is at most $d(v)+d(w)$.
\end{lemma}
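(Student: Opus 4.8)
The plan is to track exactly which wedges of $A_1$ are created or destroyed by a single edge update, and then translate each such change into the corresponding edge update in $W_1$ via the bijection between wedges $(v,\{u,w\})\in\mathcal{W}(A_1)$ and edges $\{n_{uv},n_{vw}\}$ of $W_1$. Since the edge set of $A_1$ changes by exactly one edge, a wedge can only be affected if the modified edge is involved, either as one of its two wedge edges or as its (absent) closing edge. I would treat the insertion of $\{v,w\}$ first and then obtain the removal case by a symmetry argument.

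For the insertion of $\{v,w\}$ the new vertex $n_{vw}$ appears in $W_1$, and I would split the affected wedges into two groups. Newly created wedges must contain $\{v,w\}$ as one of their two edges, so they have the form $(v,\{w,x\})$ with $x$ a neighbor of $v$ and $\{w,x\}\notin E$, or $(w,\{v,y\})$ with $y$ a neighbor of $w$ and $\{v,y\}\notin E$. Each such wedge contributes one new $W_1$-edge incident to $n_{vw}$, so their number is bounded by the number of neighbors of $v$ plus the number of neighbors of $w$, i.e. at most $d(v)+d(w)$. Destroyed wedges are exactly those that the new edge closes into a triangle, namely wedges $(x,\{v,w\})$ where $x$ is a common neighbor of $v$ and $w$; the number of such common neighbors is at most $\min(d(v),d(w))$, which gives the claimed bound on removed edges.

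The removal of $\{v,w\}$ is the dual situation, so I would argue it by interchanging the roles of \emph{created} and \emph{destroyed}. Deleting $\{v,w\}$ removes $n_{vw}$ together with all incident $W_1$-edges, which correspond to the wedges $(v,\{w,x\})$ and $(w,\{v,y\})$; as above these number at most $d(v)+d(w)$, bounding the removed edges. Conversely, deleting $\{v,w\}$ re-opens every triangle on $\{v,w\}$, creating a new wedge $(x,\{v,w\})$ for each common neighbor $x$ of $v$ and $w$; there are at most $\min(d(v),d(w))$ of these, bounding the new edges.

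The main obstacle is the observation that a single edge insertion is not purely additive on the wedge graph: the inserted edge simultaneously spawns new wedges (acting as a wedge edge) and annihilates existing wedges (acting as the closing edge of a triangle). Keeping these two effects separate, and correctly identifying the closing-triangle wedges as the \emph{only} source of deletions under insertion (and symmetrically as the only source of insertions under deletion), is the crux of the argument; the degree bounds themselves are then routine, taking $d(v)$ and $d(w)$ to be the degrees immediately before the update.
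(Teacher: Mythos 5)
Your proposal is correct and follows essentially the same argument as the paper: new edges of $W_1$ under insertion are bounded by counting the new wedges $(v,\{w,x\})$ and $(w,\{v,y\})$, giving $d(v)+d(w)$; removed edges are bounded by the triangles closed by $\{v,w\}$, i.e.\ common neighbors, giving $\min(d(v),d(w))$; and the deletion case follows by the symmetric (dual) argument. Your write-up is somewhat more explicit than the paper's about why these are the \emph{only} affected wedges, but the decomposition and the bounds are identical.
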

\begin{proof}
    By adding $\{v,w\}$ in $A$, the number of new wedges in which a vertex $v$ can be part is at most $d(v)$, and so we can at most create $d(v)$ new edges in $E(W)$.
    Analogously, for $w$, we can create at most $d(w)$ new edges.
    Therefore, $v$ and $w$ create at most $d(v)+d(w)$ edges in $E(W)$.
    Analogously, if we remove $\{v,w\}$, we destroy at most $d(v)+d(w)$ edges in $E(W)$. 
    
    The number of triangles that contain edge $\{v,w\}$ is bounded by $\min(d(v),d(w))$. If we add $\{v,w\}$ to $A$, we may close triangles of the form $\{u,v,w\}$. We remove one edge from $E(W)$ for each such triangle that is closed by edge $\{v,w\}$ in $A$; hence, at most $\min(d(v),d(w))$ edges are removed.
    On the other hand, by removing $\{v,w\}$ from $E(A)$, we can destroy at most $\min(d(v),d(w))$ triangles in $A$ and have to insert corresponding edges in $E(W)$. 
\end{proof}

\subsubsection{Updating the MWVC}\label{sec:updatingmwvc}
If the sliding time window moves forward, the current wedge graph $W$ is updated by the sequence $\sigma$. We consider the updates occurring one at a time and maintain a $2$-approximation of an MWVC in $W$.
\Cref{alg:pricing} shows our dynamic pricing approximation based on the non-dynamic approximation for the MWVC. 
The algorithm supports the needed operations of inserting and deleting edges, as well as increasing and decreasing vertex weights.
When called for the first time, an empty vertex cover $C$ and wedge graph $W$ are initialized (line~\ref{alg:init} and following), which will be maintained and updated in subsequent calls of the algorithm.
In the following, we show the general result that our algorithm gives a $k$-approximation of the MWVC after each update operation in a $k$-uniform hypergraph. 
\begin{definition}
    We assign to each edge $e\in E(W)$ a price $p(e)\in \mathbb{R}$.
    We call prices fair, if $s(v)=\sum_{e\in \delta(v)}p(e)\leq w(v)$ for all $v\in V(W)$.
    And, we say a vertex $v\in V(W)$ is tight if $s(v)=w(v)$.
\end{definition}

Let $W$ be the current hypergraph (i.e. the current wedge graph) and $\sigma$ a sequence of dynamic update requests, i.e., inserting or deleting edges and increasing or decreasing vertex weights in $W$. \Cref{alg:pricing} 
calls for each request $r\in \sigma$ a corresponding procedure to update $W$ and the current vertex cover $C$ (line \ref{alg:main_loop} and following).
We show that after each processed request, the following invariant holds.

\begin{invariant}\label{inv:fairness}
    The prices are fair, i.e., $s(v)\leq w(v)$ for all vertices $v\in V(W)$, and $C\subseteq V(W)$ is a vertex cover.
\end{invariant}

\begin{lemma}\label{lemma:dynamic}
    If Invariant \ref{inv:fairness} holds, after calling one of the procedures \textsc{InsEdge}, \textsc{DelEdge}, \textsc{DecWeight}, or \textsc{IncWeight} Invariant \ref{inv:fairness} still holds.
\end{lemma}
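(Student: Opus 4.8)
The plan is to prove the statement by induction over the sequence of updates, so it suffices to show that a single call to any one of the four procedures re-establishes \Cref{inv:fairness} from the assumption that it held beforehand. I would treat the two parts of the invariant --- fairness of the prices ($s(v)\le w(v)$ for all $v$) and $C$ being a vertex cover --- separately within each of the four cases. Because fairness is a per-vertex condition and the cover property is a per-edge condition, in every case it is enough to re-examine only those vertices whose price-sum $s(\cdot)$ or whose weight is altered, together with the single edge that is inserted or deleted; all other vertices and edges are untouched, so the invariant transfers unchanged by the inductive hypothesis.

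For the two \emph{increasing} operations the argument is short. In \textsc{InsEdge} the only edge that can be left uncovered is the new edge $e=\{u,v\}$, and the only slacks that shrink are those of $u$ and $v$. Raising $p(e)$ by exactly $\min\{w(u)-s(u),\,w(v)-s(v)\}$ keeps both $s(u)\le w(u)$ and $s(v)\le w(v)$, so fairness is preserved, and it drives at least one endpoint to tightness; adding that endpoint to $C$ covers $e$ while leaving every previously covered edge covered. (Note this uniformly handles the case where an endpoint is already tight, since then the cap is $0$ and $e$ is already covered.) In \textsc{IncWeight} the weight of $v$ only grows, so $s(v)\le w(v)$ can only become slacker and fairness holds immediately; since neither the edge set nor $C$ changes, $C$ is still a cover.

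The two \emph{decreasing} operations are where the real care is needed, and I expect \textsc{DecWeight} to be the main obstacle. In \textsc{DelEdge}, deleting $e=\{u,v\}$ lowers $s(u)$ and $s(v)$ by $p(e)$, so fairness can only improve, and removing an edge cannot uncover the remaining ones, so $C$ stays a cover. By contrast, \textsc{DecWeight} is the one operation that can actively \emph{violate} fairness: lowering $w(v)$ below the current $s(v)$ forces us to reduce the prices of edges incident to $v$ until $s(v)\le w(v)$ again. The key point to verify is that this repair is self-contained and monotone --- lowering a price $p(\{v,x\})$ only decreases the neighbour's sum $s(x)$, so no fresh fairness violation can be created at any $x$, and since no edge is deleted, the cover property is untouched. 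The delicate part of the whole lemma is precisely this coupling: showing that restoring fairness by adjusting incident prices never cascades into a violation elsewhere and never leaves an edge uncovered, together with the bookkeeping --- needed for the subsequent approximation bound --- that vertices are added to, or safely retained in, $C$ consistently with tightness. I would handle it by making each price change either purely raising (to cover a threatened edge, capped by slack) or purely lowering (to repair fairness), so that the two directions never interfere and the invariant is re-established after a bounded, local set of adjustments.
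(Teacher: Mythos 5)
There is a genuine gap: your case analyses for \textsc{DelEdge}, \textsc{IncWeight}, and \textsc{DecWeight} are not about the procedures of \Cref{alg:pricing}, but about simplified procedures in which the cover $C$ is never touched. The actual procedures deliberately \emph{remove} vertices from $C$ whenever their tightness is destroyed: \textsc{DelEdge} deletes both endpoints of $e_d$ from $C$ (their sums $s(u),s(v)$ drop by $p(e_d)$, so they are in general no longer tight), \textsc{IncWeight} deletes $v$ from $C$ if $v\in C$, and \textsc{DecWeight} deletes $v$ and any neighbors that cease to be tight after incident prices are zeroed. These deletions can leave edges uncovered, and the entire substance of the lemma is to show that the set $F$ assembled by each procedure is exactly the set of edges left uncovered, and that the call to \textsc{Update} then restores the cover while keeping prices fair (this is where \Cref{lemma:update} is invoked). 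Your claims that after \textsc{DelEdge} ``$C$ stays a cover,'' that in \textsc{IncWeight} ``neither the edge set nor $C$ changes,'' and that in \textsc{DecWeight} ``the cover property is untouched'' skip precisely this step, so they do not establish \Cref{inv:fairness} for the procedures the statement refers to. Only your \textsc{InsEdge} case matches the algorithm and the paper's argument.

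The reason you cannot simply leave non-tight vertices in $C$ (which would indeed preserve the invariant as literally stated) is the approximation guarantee that this lemma feeds into: Theorem 1 bounds $w(C)\leq 2\,w(OPT)$ via $\tfrac12 w(C)\leq \sum_{e\in E(W)}p(e)$, and that inequality needs every vertex of $C$ to be tight, i.e., $w(v)=s(v)=\sum_{e\in\delta(v)}p(e)$ for all $v\in C$, so that summing over $C$ counts each price at most twice. Your modified procedures keep a valid cover but accumulate non-tight vertices in it, after which $w(C)$ is no longer bounded by twice the total price, and the 2-approximation collapses. You flag this issue yourself (``safely retained in $C$ consistently with tightness''), but the concrete cases contradict it: a vertex retained in $C$ after \textsc{DelEdge} or after your fairness repair in \textsc{DecWeight} is generally \emph{not} tight. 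The fix is exactly what \Cref{alg:pricing} does: evict such vertices, gather the edges thereby uncovered into $F$, and re-cover them with \textsc{Update}.
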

In order to show Lemma 3, we first show the following properties for the \textsc{Update} procedure.
\begin{lemma}\label{lemma:update}
    If the prices are fair, after calling $\textsc{Update}$, the prices are still fair. 
    Furthermore, let $C$ cover all hyperedges $E(W)\setminus F$. After $\textsc{Update}$, all hyperedges in $F$ are covered by a vertex in $C$.
\end{lemma}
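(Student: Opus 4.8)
The plan is to analyze the \textsc{Update} procedure, which I expect to be the dynamic counterpart of the pricing subroutine described in \Cref{sec:stcovertime}: given the set $F$ of edges that are currently uncovered, it scans the edges of $F$ and, for each still-uncovered $e=\{u,v\}$, raises the price $p(e)$ by the available slack $\delta=\min(w(u)-s(u),\,w(v)-s(v))$ so that at least one endpoint becomes tight, and inserts that newly tight vertex into $C$. I would prove the two claims of the lemma --- preservation of fairness and coverage of $F$ --- separately, handling a single price increase first and then extending to the whole scan.

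For fairness, I would first observe that raising the price of $e=\{u,v\}$ changes only the two partial sums $s(u)$ and $s(v)$, both by the same amount $\delta$, while leaving $s(x)$ unchanged for every other vertex $x$. Since \Cref{inv:fairness} holds before the call, both slacks $w(u)-s(u)$ and $w(v)-s(v)$ are nonnegative, so $\delta\geq 0$ is a legal increase, and by the choice $\delta=\min(w(u)-s(u),w(v)-s(v))$ we obtain $s(u)\leq w(u)$ and $s(v)\leq w(v)$ after the update. Hence fairness is preserved for $u$ and $v$, and trivially for all other vertices, which gives the first claim.

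For coverage, I would argue by monotonicity together with induction over the edges of $F$ processed by the scan. The key structural fact is that a price increase can only raise a partial sum $s(x)$, so once a vertex is tight it stays tight for the remainder of \textsc{Update}; consequently every vertex already in $C$ remains in $C$, and every edge in $E(W)\setminus F$ that was covered before the call stays covered. For an edge $e=\{u,v\}\in F$ there are two cases when the scan reaches it: either one of its endpoints is already tight (and thus in $C$), so $e$ is already covered and nothing needs to be done, or neither endpoint is tight, in which case the price increase makes at least one of $u,v$ tight and that vertex is added to $C$, so $e$ becomes covered. In both cases $e$ is covered after its step, and by the monotonicity argument it remains covered through the rest of the scan; therefore all edges of $F$ are covered once \textsc{Update} finishes.

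The main obstacle I anticipate is making the interaction between edges of $F$ precise: since raising the price of one edge can make an endpoint shared with another edge of $F$ tight, the bookkeeping must ensure that an edge covered as a side effect is correctly recognized as covered and skipped --- so that $\delta$ is never forced to be negative --- and that the expression $\delta=\min(w(u)-s(u),w(v)-s(v))$ always yields a nonnegative quantity. Both points reduce to the monotonicity of $s$ under price increases together with fairness holding at the start, so once those are isolated the remaining argument is routine.
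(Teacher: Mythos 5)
Your proof is correct and takes essentially the same route as the paper's: a case split on whether an edge of $F$ has a tight endpoint, with fairness preserved because the price increase stops exactly when an endpoint becomes tight, and coverage ensured because the newly tight endpoint is added to $C$ (and edges with an already-tight endpoint are skipped). Your extra bookkeeping --- the explicit increment $\delta=\min(w(u)-s(u),\,w(v)-s(v))$ and the monotonicity of tightness under price increases, which handles interactions between edges of $F$ --- only spells out what the paper's terser proof leaves implicit.
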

\begin{proof}
    For each hyperedge $e\in F$ with a tight vertex, $p(e)$ is not changed, and $e$ is covered by the tight vertex in $C$.
    Now, for each hyperedge $e\in F$ with no endpoint being tight, the price $p(e)$ is increased until one $u\in e$ is tight.
    Therefore, $s(u)\leq w(u)$ for all $u\in e$. Furthermore, at least one of the vertices $u\in e$ is added to $C$; hence, $e$ is covered. 
\end{proof}

\begin{proof}[Proof of \Cref{lemma:dynamic}]
    We prove the lemma for each procedure separately.
    \begin{itemize}
        \item \textsc{InsEdge}: Let $e_n$ be the new hyperedge. If $u\in e_n$ is tight, the price of $e_n$ will be $p(e_n)=0$, and hence $s(u)$ for $u\in e_n$ does not change. Moreover, $e_n$ is covered.
        If all $u\in e_n$ are not tight, the \textsc{Update} procedure will add $u\in e_n$ to $C$ while maintaining fairness by \Cref{lemma:update}.
        \item \textsc{DelEdge}: Let $e_d$ be the hyperedge that is removed from $E(W)$. By updating $s(u)$ for $u\in e_d$ to their new values $s'(u)=s(u)-p(e_d)$ the vertices $u\in e_d$ might not be tight anymore. We remove them from $C$, which then might not be a valid vertex cover anymore. However, the prices are with $s'(u)$ fair. 
        To repair the vertex cover $C$, we call \textsc{Update} with all edges incident to $u\in e_d$ whose other endpoints are not tight. These are precisely the hyperedges not covered by $C$. After running \textsc{Update}($F$), $C$ is a vertex cover, and fairness is maintained (\Cref{lemma:update}).
        \item \textsc{DecWeight}: Decreasing the weight of $w(v)$ can affect the fairness. Hence, we have to adapt, i.e., lower, the prices of the hyperedges incident to $v$. By setting the prices of hyperedges $e$ incident to $v$ to $p(e)=0$, fairness is restored. 
        However, this might affect the tightness of neighbors of $v$ because a formerly tight neighbor $x$ connected by a hyperedge $f$ might not be tight after setting $p(f)=0$. Hence, we remove $v$ and neighbors that are not tight anymore from $C$ and collect all non-covered hyperedges from neighbors $x$ of $v$ into the set $F$. %
        The set $F$ contains all edges that are not covered.
        Calling \textsc{Update} on the set $F$ leads to a valid vertex cover $C$ while maintaining fairness (\Cref{lemma:update}).
        
        \item \textsc{IncWeight}: Increasing the weight $w(v)$ does not affect fairness. However, it affects the tightness of $v$. Hence, if $v\in C$, we first remove $v$ from $C$ and add all hyperedges $f$ incident to $v$ for which all $x\in f$ are also not tight to the set $F$.
        All other edges not in $F$ are covered by some $w\in C$.
        Again,  by calling \textsc{Update} on the set $F$, we obtain a valid vertex cover $C$ and maintain fairness (\Cref{lemma:update}).
    \end{itemize}
\end{proof}
\begin{theorem}
    \Cref{alg:pricing} 
    maintains a vertex cover with $w(C)\leq k\cdot w(C^*)$, where $C^*$ is an optimal MWVC.
\end{theorem}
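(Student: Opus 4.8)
The plan is to combine the dynamic correctness already established in \Cref{lemma:dynamic} with the classical analysis of the static pricing method. First I would argue by induction over the update sequence: when \Cref{alg:pricing} is first called, $C=\emptyset$ and $W$ is empty, so Invariant~\ref{inv:fairness} holds trivially, and by \Cref{lemma:dynamic} each of the four procedures preserves it. Hence after processing any prefix of the request sequence the prices are fair and $C$ is a vertex cover of the current wedge graph $W$. The theorem is then a statement about this fixed snapshot, so from here the argument becomes entirely static and reduces to the textbook pricing estimate.

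The core is the usual two-sided bound on the total price $P=\sum_{e\in E(W)}p(e)$. For the lower bound on $w(OPT)$ I would use that $OPT$ is a vertex cover, so every edge has at least one endpoint in $OPT$; since prices are non-negative, summing slacks over $OPT$ counts each edge at least once, and fairness gives
\[
P=\sum_{e\in E(W)}p(e)\le\sum_{v\in OPT}\sum_{e\in\delta(v)}p(e)=\sum_{v\in OPT}s(v)\le\sum_{v\in OPT}w(v)=w(OPT).
\]
For the upper bound on $w(C)$ I would use that every vertex of $C$ is tight, so $w(v)=s(v)$ there, together with the fact that each edge contributes its price to at most two vertex slacks:
\[
w(C)=\sum_{v\in C}w(v)=\sum_{v\in C}s(v)=\sum_{v\in C}\sum_{e\in\delta(v)}p(e)\le 2\sum_{e\in E(W)}p(e)=2P.
\]
Chaining the two displays yields $w(C)\le 2P\le 2\,w(OPT)$, which is the claim.

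The main obstacle is justifying the middle equality $w(v)=s(v)$ for all $v\in C$, i.e.\ that $C$ contains only tight vertices. Invariant~\ref{inv:fairness} as stated guarantees only fairness and that $C$ is a cover, so I would either strengthen the invariant to also assert ``every $v\in C$ is tight'' and re-verify this inside the four cases of \Cref{lemma:dynamic}, or argue directly from the procedures that a vertex enters $C$ only through \textsc{Update} (which adds it exactly at the moment it becomes tight) and is removed from $C$ as soon as a weight increase or a price decrease destroys its tightness. Indeed \textsc{IncWeight} and \textsc{DecWeight} already drop vertices from $C$ precisely when they stop being tight, so this tightness property travels alongside the invariant; making it explicit is the one spot requiring care, since everything else is the standard $2$-approximation bound for the pricing method.
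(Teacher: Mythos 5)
Your proof is correct and takes essentially the same route as the paper: \Cref{lemma:dynamic} supplies fairness and the cover property, and the bound then follows from the two standard pricing inequalities $\sum_{e\in E(W)}p(e)\leq w(OPT)$ and $w(C)\leq 2\sum_{e\in E(W)}p(e)$, which the paper's proof simply asserts as facts (1) and (2) while you spell out their derivations. Your concern about the step $w(v)=s(v)$ for all $v\in C$ is legitimate---Invariant~\ref{inv:fairness} as stated does not guarantee it---but, as you suggest, it holds by construction, since \textsc{Update} adds only newly tight vertices to $C$ and every procedure removes a vertex from $C$ the moment its tightness may be destroyed, so the paper's implicit assumption that $C$ consists of tight vertices is sound.
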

\begin{proof}
    \Cref{lemma:dynamic} ensures that $C$ is a vertex cover and after each dynamic update the prices are fair, i.e., $\sum_{e\in \delta(v)}p(e)\leq w(v)$.
    Furthermore, (1) for an optimal MWVC $C^*$ and fair prices, it holds that $\sum_{e\in E(W)}p(e)\leq w(C^*)$.
    To see this, consider the optimal MWVC $C^*$ and 
    \[
    \sum_{u\in C^*}\sum_{e\in E(W):u\in e} p(e) \leq \sum_{u\in C^*}w(u)=w(C^*).
    \]
    Because $C^*$ is a vertex cover, each edge contributes $p(e)$ (at least once) to the left hand side.
    Hence, 
    \[
    \sum_{e\in E(W)} p(e)\leq\sum_{u\in C^*}\sum_{e\in E(W):u\in e} p(e) \leq \sum_{u\in C^*}w(u)=w(C^*).
    \]
    
    Moreover, (2) for the vertex cover $C$ and the computed prices, it holds that $\frac{1}{k}w(C)\leq \sum_{e\in E(W)}p(e)$. 
    First note that $\sum_{e\in E:u\in e}p(e)=w(u)$ for all $u\in C$ because each node in $C$ is tight.
    Therefore, we have
    \[w(C)=\sum_{u\in C}w(u)=\sum_{u\in C}\sum_{e\in E(W):u\in e}p(e)\leq k\cdot \sum_{e\in E(W)}p(e),\]
    where the last inequality holds because for each $k$-uniform hyperedge $e$, we count $p(e)$ at most $k$ times. %
    
    Consequently, the theorem follows from (1) and (2) as we have $w(C)\leq k\cdot \sum_{e\in E}p(e)$ and $k\cdot \sum_{e\in E} p(e)\leq k\cdot w(C^*)$.
\end{proof}

We now discuss the running times of the dynamic update procedures.
For each of the four operations, the running time is in $\mathcal{O}(F)$, i.e., the size of the set for which we call the \textsc{Update} procedure.
\begin{theorem}
    Let $d_\text{max}$ be the largest degree of any vertex in $V(W)$. The running time of 
    \textsc{InsEdge} is in $\mathcal{O}(1)$, and \textsc{DelEdge} is in $\mathcal{O}(d_\text{max})$.
    \textsc{DecWeight} is in $\mathcal{O}(d_\text{max}^2)$, and \textsc{IncWeight} is in $\mathcal{O}(d_\text{max})$.
\end{theorem}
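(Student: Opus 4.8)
The plan is to reduce the cost of every procedure to the cost of the single call it makes to \textsc{Update}, plus the bookkeeping needed to assemble the argument set $F$. First I would record the data-structure assumptions that make this reduction clean: each vertex $v\in V(W)$ stores its current sum of incident prices $s(v)$ and its weight $w(v)$, maintained incrementally whenever a price changes, together with a boolean flag for membership in $C$; wedge-graph vertices keep adjacency lists. With these, testing whether a vertex is tight, reading or updating a price $p(e)$, and inserting or removing a vertex from $C$ are all $\mathcal{O}(1)$. The central observation is that $\textsc{Update}(F)$ runs in $\mathcal{O}(|F|)$: by \Cref{lemma:update} it performs a single pass over $F$, and for each edge it only checks the tightness of its two endpoints, possibly raises that edge's price once to make one endpoint tight, and adds that endpoint to $C$. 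Raising a price updates the two affected $s(\cdot)$ values in $\mathcal{O}(1)$ and never triggers re-examination of edges outside $F$, so there is no cascading and the pass is linear in $|F|$.

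Given this, the three cheap procedures follow from bounding $|F|$. For \textsc{InsEdge} we either set $p(e_n)=0$ after an $\mathcal{O}(1)$ tightness test, or call \textsc{Update} on the single new edge, so $|F|\le 1$ and the cost is $\mathcal{O}(1)$. For \textsc{IncWeight} the only edges that can become uncovered are those incident to $v$, so $F\subseteq\delta(v)$ and $|F|\le d(v)\le d_\text{max}$, collected by one scan of $v$'s adjacency list, for a total of $\mathcal{O}(d_\text{max})$. For \textsc{DelEdge} we update $s(u)$ and $s(v)$ in $\mathcal{O}(1)$, possibly drop $u$ and $v$ from $C$, and gather the now-uncovered edges, which are exactly those incident to $u$ or $v$ with a non-tight other endpoint; hence $|F|\le d(u)+d(v)\le 2d_\text{max}$, again giving $\mathcal{O}(d_\text{max})$.

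The main obstacle is \textsc{DecWeight}, where the quadratic bound arises. Here we first zero the prices of the (at most $d(v)\le d_\text{max}$) edges incident to $v$, restoring fairness in $\mathcal{O}(d_\text{max})$. But zeroing these prices can make as many as $d_\text{max}$ neighbors $x$ of $v$ lose tightness, and each such $x$ must be removed from $C$ and have its incident edges inspected to collect those that become uncovered. Since each neighbor can itself have degree up to $d_\text{max}$, scanning all of them costs $\mathcal{O}(d_\text{max}^2)$ and yields $|F|=\mathcal{O}(d_\text{max}^2)$; the subsequent $\textsc{Update}(F)$ is then also $\mathcal{O}(d_\text{max}^2)$ and dominates. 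I would take care to argue that each vertex is touched only a constant number of times during this collection phase, so the bound is genuinely $\mathcal{O}(d_\text{max}^2)$ rather than larger, and that the fairness and covering guarantees invoked here are precisely those established in \Cref{lemma:update}. Combining the four cases yields the stated running times.
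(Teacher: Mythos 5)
Your proof is correct and takes essentially the same approach as the paper, which justifies the theorem only by the remark that each operation costs $\mathcal{O}(|F|)$, the size of the set passed to \textsc{Update}, leaving the per-case bounds on $|F|$ implicit. Your explicit data-structure assumptions, the $\mathcal{O}(|F|)$ analysis of \textsc{Update}, and the case-by-case bounds ($|F|\le 1$ for \textsc{InsEdge}, $|F|\le d(u)+d(v)$ for \textsc{DelEdge}, $F\subseteq\delta(v)$ for \textsc{IncWeight}, and the $\mathcal{O}(d_\text{max}^2)$ neighbor-scan for \textsc{DecWeight}) fill in exactly the details the paper omits.
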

\begin{algorithm}
    \caption{Dynamic Pricing Approximation }
    \label{alg:pricing}
    \small
    \Input{Sequence $\sigma$ of dynamic update requests}
    \Output{Algorithm maintains a $k$-approximation of MWVC $C$}
    \BlankLine
    \DontPrintSemicolon
    \SetKwProg{Fn}{Procedure}{:}{}

        Initialize and maintain vertex cover $C$\;\label{alg:init}   
        and wedge graph $W$\;
        
        \BlankLine
        \SetKwFunction{FUpdate}{Update}
        \Fn{\FUpdate{set of edges $F$}}{
            \ForEach{$e\in F$}{
                \lIf{there is $u\in e$ that is tight}{
                    continue
                }
                increase $p(e)$ until $u\in e$ is tight\;
            }
            add newly tight vertices to $C$\; 
        }

        \SetKwFunction{FDecreaseWeight}{DecWeight}
        \Fn{\FDecreaseWeight{$v$, $w_n$}}{
            $w(v)\gets w_n$\;
            
            $C\gets C\setminus \{v\}$\;
            
            $F'\gets \delta(v) $ and initialize $F = \emptyset$\;
            
            \ForEach{$e\in F'$}{
                $p(e)\gets 0$\;
                
                \If{all $x\in e$ are not tight}{
                    $C\gets C\setminus e$\;
                    
                    $F\gets F\cup \{f \in E(G) \mid f\cap e\neq \emptyset \text{ and all $x\in f$ are not tight } \} \cup \{e\}$\;
                }
            }
            \FUpdate{$F$}
        }

        \SetKwFunction{FDelete}{DelEdge}
        \Fn{\FDelete{$e_d\in E(W)$}}{
            $E(W)\gets E(W)\setminus \{e_d\}$\;
            
            update $s(u)$ for $u\in e_d$\;
            
            $C\gets C\setminus e_d$\;
            
            $F\gets \{ f \in E(W)\mid e_d\cap f \neq \emptyset \text{ and all } x\in f \text{ are not tight} \} $\;
            
            \FUpdate{$F$}
        }

        \SetKwFunction{FInsert}{InsEdge}
        \Fn{\FInsert{$e_n$}}{
            $E(W)\gets E(W)\cup \{e_n\}$\;
            
            \FUpdate{$\{e_n\}$}
        }

        \SetKwFunction{FIncreaseWeight}{IncWeight}
        \Fn{\FIncreaseWeight{$v$, $w_n$}}{
            $w(v)\gets w_n$\;
            
            \If{$v \in C$}{
                $C\gets C\setminus \{v\}$\;
                
                $F\gets \{ e \in E(W)\mid v\in e \text{ and there is no tight vertex in } e\} $\;
                
                \FUpdate{$F$}
            }
        }
        \BlankLine  
        
        \ForEach{update request $r\in \sigma$} {\label{alg:main_loop}
            Call the to $r$ corresponding procedure\\ $f\in \{\FInsert, \FDelete,$ $ \FIncreaseWeight, \FDecreaseWeight\}$.
        }
\end{algorithm}

\subsubsection{The streaming algorithm for {MinWeightSTC}}\label{sec:streamfinal}
\Cref{alg:streaming} 
shows the final streaming algorithm that expects as input a stream of chronologically ordered temporal edges and the time window size $\Delta$. As long as edges are arriving, it iteratively updates the time windows and uses \Cref{alg:pricing} 
to compute the \textsc{MinWeightSTC} approximation for the current time window $\tau_t$ with $t\in[1,T(\tg)-\Delta]$.
\Cref{alg:streaming} 
outputs the strong edges based on the computed vertex cover $C_t$ in line~\ref{alg:streaming:output}.
It skips lines \ref{alg:streaming:computeA}-\ref{alg:streaming:output} if there are no changes in $E_\tau$.
\begin{theorem}
    Let $d^W_t$ ($d^A_t$) be the maximal degree in $W$ ($A$, resp.) after iteration $t$ of the while loop in \Cref{alg:streaming}. 
    The running time of iteration $t$ is in $\mathcal{O}(\xi\cdot d_t^A\cdot (d_t^W)^2)$, with $\xi=\max\{|E^-_t|, |E^+_t|\}$.
\end{theorem}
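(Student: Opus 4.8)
The plan is to charge the cost of iteration $t$ entirely to the dynamic updates that it feeds into \Cref{alg:pricing}, separating the accounting into two factors: the total number of update requests issued to the dynamic pricing routine during the iteration, and the worst-case cost of a single such request. Multiplying the two yields the claimed bound. All remaining bookkeeping in \Cref{alg:streaming} (detecting which temporal edges enter and leave the window and applying the corresponding weight changes, insertions, and deletions to the aggregated graph $A$) is proportional to the number of affected temporal edges and to the number of induced wedge-graph operations, so it is subsumed by the same estimate.

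First I would bound the number of requests. When the window advances by one step, the only temporal edges whose membership changes are those in $E^-_t$ (leaving) and $E^+_t$ (entering), so at most $|E^-_t| + |E^+_t| \le 2\xi$ edges of $A$ are touched; each touched edge triggers at most one operation on $A$ (a weight decrease or increase, or, when the weight reaches zero or the edge first appears, a deletion or insertion). Hence $A$ undergoes $\mathcal{O}(\xi)$ update operations. I would then push these onto $W$ using \Cref{lemma:numberedges}: every edge insertion or deletion in $A$ creates at most $d(v)+d(w) \le 2 d_t^A$ and destroys at most $\min(d(v),d(w)) \le d_t^A$ edges of $W$, i.e.\ it generates $\mathcal{O}(d_t^A)$ \textsc{InsEdge}/\textsc{DelEdge} requests, while every weight change on a surviving edge of $A$ generates a single \textsc{IncWeight}/\textsc{DecWeight} request on the corresponding vertex of $W$. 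Summing over the $\mathcal{O}(\xi)$ operations on $A$, the total number of requests handed to \Cref{alg:pricing} during iteration $t$ is $\mathcal{O}(\xi \cdot d_t^A)$.

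Next I would bound the cost of a single request. By the preceding theorem on the running times of the update procedures, taken with $d_{\max} = d_t^W$, we have that \textsc{InsEdge} runs in $\mathcal{O}(1)$, both \textsc{DelEdge} and \textsc{IncWeight} in $\mathcal{O}(d_t^W)$, and \textsc{DecWeight} in $\mathcal{O}((d_t^W)^2)$ time, so every request costs $\mathcal{O}((d_t^W)^2)$ in the worst case. Combining the two factors gives a total running time of $\mathcal{O}(\xi \cdot d_t^A) \cdot \mathcal{O}((d_t^W)^2) = \mathcal{O}(\xi \cdot d_t^A \cdot (d_t^W)^2)$ for iteration $t$, as claimed.

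The main obstacle I expect is the correct coupling between the two graphs: the multiplicative factor $d_t^A$ comes not from operations on $A$ itself but from the number of wedge-graph edges that a single edge change in $A$ can create or destroy, which is precisely what \Cref{lemma:numberedges} controls. A secondary point to handle cleanly is that the per-procedure bounds are stated in terms of the maximal degree in $W$, so I must argue that $d_t^W$ legitimately upper-bounds the degrees encountered throughout the iteration and not merely at its end. The resulting estimate is deliberately loose, as it replaces every request by the worst-case \textsc{DecWeight} cost; this is what produces the clean product form in the statement, whereas a tighter split would instead give $\mathcal{O}(\xi (d_t^W)^2 + \xi\, d_t^A d_t^W)$, which the stated bound dominates.
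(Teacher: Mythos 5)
Your proof is correct and follows essentially the same route as the paper's: bound the length of the update sequence by $\mathcal{O}(\xi \cdot d_t^A)$ via \Cref{lemma:numberedges}, bound each request's cost by the worst-case \textsc{DecWeight} cost $\mathcal{O}((d_t^W)^2)$ from the preceding theorem, and multiply. Your version is in fact slightly more careful than the paper's (separating weight-change requests from insert/delete requests and noting the tighter bound $\mathcal{O}(\xi (d_t^W)^2 + \xi\, d_t^A d_t^W)$), but the argument structure is identical.
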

\begin{proof}
    We have the worst-case running time for a sequence $\sigma_t$ that contains only \textsc{DecWeight} requests; see Theorem~2.
    Hence, the running time is in $\mathcal{O}(|\sigma_t|\cdot (d_t^W)^2)$.
    For the length of the sequence $\sigma_t$, we have the following considerations.
    By Lemma 2, we know that the number of \textsc{InsEdge} and \textsc{DelEdge} requests for one new or removed edge are in $\mathcal{O}(d_t^A)$, where $d_t^A$ is the maximal degree in $A$. 
    So any edge insertion into (or removal from) $A$ during iteration $t$ leads to at most $\mathcal{O}(d_t^A)$ requests.
    With $\xi=\max\{|E^-_t|, |E^+_t|\}$ it follows $|\sigma_t|\leq \xi\cdot d^A_t$. Hence, the result follows.
\end{proof}
\begin{algorithm}[htb]
    \caption{Streaming algorithm for the STC in temporal networks}
    \label[algorithm]{alg:streaming}
    \Input{Stream of edges arriving in chronological order, $\Delta\in\mathbb{N}$}
    \Output{2-Approx. of \textsc{MinWeightSTC} for each time window of size $\Delta$}
    \BlankLine
    \DontPrintSemicolon
    Initialize $t_s=1$, $t_e=t_s+\Delta-1$\;
    
    Initialize empty list of edges $E_\tau$ and empty aggregated graph $A$\;
    
    \While{temporal edges are incoming}{
        
        Update $E_{\tau}$ for time window $\tau_t=[t_s,t_e]$ such that $\forall e\in E_{\tau}$ it holds $t(e)\in \tau_t$\;
        
        Let $E^-_\tau$ ($E^+_\tau$) be the edges removed from (inserted to) $E_\tau$\;
        
        \If{$E^-_{\tau}\neq \emptyset$ \kwOr $E^+_\tau\neq\emptyset$}{
            
            Use $E^-_{\tau}$ and $E^+_\tau$ to update $A$ and to obtain the update sequence $\sigma_t$\;\label{alg:streaming:computeA}
            
            Call \Cref{alg:pricing} with $\sigma_t$ to obtain the the MWVC approximation $C_t$\;
            
            Output $S_t=\{\{u,v\}\in E(A)\mid n_{u,v}\not\in C_t\}$\;\label{alg:streaming:output}
        }
        
        Move time window forward by increasing $t_s$ and $t_e$\;
    }
\end{algorithm}

\subsubsection{The streaming algorithm for WeightedMinSTC+}\label{sec:stcplusstreaming}
The streaming algorithm presented in the previous section can be straightforwardly adapted for the \textsc{WeightedMinSTC+} problem.
The main difference is that instead of a wedge graph, we need to maintain a wedge hypergraph $H$ as described in~\Cref{sec:stcplus}. 

When the time window moves forward, the updating the aggregated graphs is the same for the \textsc{WeightedMinSTC} and the \textsc{WeightedMinSTC+} problem. After updating the aggregated graph, the wedge hypergraph $H$ for the \textsc{WeightedMinSTC+} problem needs to be updated, which
is done similarly to the update of the wedge graph described in \Cref{sec:streaming}. Instead of normal edges, we insert and remove hyperedges defined by the wedges in the current aggregated graph $A$. For each new wedge $(v,\{u,w\})\in \mathcal{W}(A)$, we add the hyperedge $ \{n_{uv},n_{vw},n'_{uw}\}$ with $n'_{uw}$ being a vertex corresponding to a potential new edge $\{u,w\}\in F\subseteq \textstyle{V \choose 2} \setminus E(A)$ in the aggregated graph. We call $n'_{uw}$ a \emph{new} vertex in $V(H)$.
We add missing (new) vertices and remove isolated (new) vertices as we update the wedge hypergraph.
Furthermore, the vertex weight function of the hypergraph is updated according to the edge weight changes in the aggregated graph.

When inserting or deleting edges from $E(A)$, or when changing edge weights in $A$, we have to additionally consider the weight of the new vertices in the wedge hypergraph $H$.  
To this end, consider an edge $e=\{u,v\}\in E(A)$ that is modified, i.e., inserted, deleted, or its weight is changed due to the moving time window.
Let $W_e$ be the set of wedges in $A$ that contain the edge $e$.
For each wedge $(v,\{u,w\})\in W_e$, there exists a corresponding hyperedge $\{n_{uv},n_{vw},n'_{uw}\}\in E(H)$ with $n'_{uw}$ being a new node whose weight depends on the weight of $e$.
Recall that the weight of $n'_{uw}$ is defined as 
\[
w(n'_{uw})=\alpha \cdot \sum_{(x,\{u,w\})\in \mathcal{W}_f} \frac{w(\{u,x\})+w(\{x,w\})}{|\mathcal{W}_f|}\text{,}
\]
where $\mathcal{W}_f=\{(x,\{u,w\})\mid x\in V(A)\}\subseteq \mathcal{W}(A)$.
When inserting or deleting $e$, the set $\mathcal{W}_f$ needs to be updated accordingly. Furthermore, when updating the weight $w(e)=w(\{u,x\})$ for $x=v$ (including after inserting or deleting $e$), the value of $w(n'_{uw})$ is updated as well.

As our dynamic MWVC algorithm in~\Cref{sec:updatingmwvc} already is stated for $k$-uniform hypergraphs it can be directly applied to update the MWVC in the wedge hypegraph leading to a $3$-approximation together with \Cref{alg:streaming}.

\section{Experiments}\label{sec:experiments}
We compare the weighted and unweighted STC and STC+ on real-world temporal networks and evaluate the efficiency of our streaming algorithm. More specifically, we discuss the following research questions:
\begin{itemize}
    \item[\textbf{Q1.}] How do the weighted and non-weighted versions of the STC and STC+ compare to each other?
    \item[\textbf{Q2.}] What is the impact of the parameter $\alpha$ on the weighted STC+?
    \item[\textbf{Q3.}] How is the efficiency of our streaming algorithm? 
\end{itemize}

\subsection{Algorithms}
We use the following algorithms for computing the weighted STC.
\begin{itemize}
    \item \texttt{ExactW} and \texttt{ExactW+} are the weighted exact computation using the ILPs for the weighted STC and STC+ (see \Cref{appendix:ilp}).
    \item \texttt{Pricing} and \texttt{Pricing+} use the non-dynamic pricing approximation in the wedge graph for the weighted STC and STC+.
    \item \texttt{DynAppr} is our dynamic streaming \Cref{alg:streaming}.
    \item \texttt{STCtime} is a baseline streaming algorithm that recomputes the MWVC with the pricing method for each time window.
\end{itemize}
And, we use the following algorithms for computing the non-weighted STC.
\begin{itemize}
    \item \texttt{ExactNw} and \texttt{ExactNw+} are the exact computations using an ILP  (see~\cite{adriaens2020relaxing,veldt2022correlation}).
    \item \texttt{Matching} is the matching-based approximation of the unweighted vertex cover in the (non-weighted) wedge graph, see~\cite{sintos2014using}.
    \item \texttt{Matching+} is the adapted matching-based approximation of the unweighted vertex cover for  the non-weighted wedge hypergraph.
    \item \texttt{HighDeg} is a $\mathcal{O}(\log n)$ approximation by iteratively adding the highest degree vertex to the vertex cover, and removing all incident edges, see~\cite{sintos2014using}.
\end{itemize}
We implemented all algorithms in C++, using GNU CC Compiler 9.3.0 with the flag \texttt{--O2} and Gurobi 9.5.0 with Python 3 for solving ILPs. All experiments ran on a workstation with an AMD EPYC 7402P 24-Core Processor with 3.35 GHz and 256 GB of RAM running \text{Ubuntu 18.04.3} LTS, and with a time limit of twelve hours. 
Our source code is available at \url{gitlab.com/tgpublic/tgstc}.

\subsection{Data Sets} 
We use the following real-world temporal networks from different domains.
The first three data sets are human contact networks from the \emph{SocioPatterns} project. For these networks, the edges represent human contacts that are recorded using proximity sensors in twenty-second intervals.
The contact networks are available at \url{www.sociopatterns.org/}.
\begin{itemize}
    \item \emph{Malawi} is a contact network of individuals living in a village in rural Malawi~\citep{ozella2021using}. The network spans around 13 days.
    \item \emph{Copresence} is a contact network representing spatial copresence in a workplace over 11 days~\citep{genois2018can}.
    \item \emph{Primary} is a contact network among primary school students over two days~\citep{stehle2011high}.
\end{itemize}
Furthermore, we use four online communication and social networks. 
\begin{itemize}
    \item \emph{Enron} is an email network between employees of a company spanning over 3.6 years~\citep{klimt2004enron}. The data set is available at \url{www.networkrepository.com/}.
    \item \emph{Yahoo} is a communication network available at the \emph{Network Repository}~\citep{nr} (\url{www.networkrepository.com/}). The network spans around 28 days.
    \item \emph{StackOverflow} is based on the stack exchange website StackOverflow~\citep{paranjape2017motifs}. Edges represent answers to comments and questions. The network spans around 7.6 years. 
    It is available at \url{snap.stanford.edu/data/index.html}.
    \item \emph{Reddit} is based on the \emph{Reddit} social network~\citep{hessel2016science,liu2019sampling}.
    A temporal edge $(\{u,v\},t)$ means that a user $u$ commented on a post or comment of user $v$ at time $t$. The network spans over 10.05 years.
    We used a subgraph from the data set provided at \url{www.cs.cornell.edu/~arb/data/temporal-reddit-reply/}.
\end{itemize}

When loading the data sets, we ignore possible self-loops at vertices. 
\Cref{table:datasets_stats} shows the statistics of the data set.
Note that for a wedge graph $W$ of an aggregated graph $A$, $|V(W)|=|E(A)|$, and the number of edges $|E(W)|$ equals the number of wedges in $A$.
For \emph{Reddit} and \emph{StackOverflow} the size of $|E(W)|$ and the number of triangles are estimated using vertex sampling from~\cite{wu2016counting}.

\begin{table}[htb]
    \centering
    \caption{Statistics of the data sets (*estimated). } 
    \label{table:datasets_stats}
    \resizebox{1\linewidth}{!}{ 	\renewcommand{\arraystretch}{1}\setlength{\tabcolsep}{5pt}
        \begin{tabular}{lrrrrrr}
            \toprule
            \multirow{3}{0.5cm}{\vspace*{4pt}\textbf{Data~set}\vspace*{4pt}}&\multicolumn{6}{c}{\textbf{Properties}}\\
            \cmidrule{2-7} 
            \textbf{ }        &    $|V|$ &    $|\tge|$ & $|\mathcal{T}(\tg)|$ & $|V(W)|$  &     $|E(W)|$ & \#Triangles\\ \midrule
            \emph{Malawi}     &     $86$ &    102\,293 &   $43\,438$ &  $347$   &     $2\,254$ &       $441$ \\ 
            \emph{Copresence} &      219 & 1\,283\,194 &     21\,536 & 16\,725  &     549\,449 &    713\,002 \\
            \emph{Primary}    &    $242$ &  $125\,773$ &    $3\,100$ & $8\,317$ &   $337\,504$ &  $103\,760$ \\ 
            \emph{Enron}      &  87\,101 & 1\,147\,126 &    220\,312 & 298\,607 & 45\,595\,540  & 1\,234\,257 \\
            \emph{Yahoo}      & 100\,001 & 3\,179\,718 & 1\,498\,868 & 594\,989 & 18\,136\,435 &    590\,396 \\
            \emph{StackOverflow} & 2\,601\,977	 & 63\,497\,050 & 41\,484\,769 & 28\,183\,518 & *33\,898\,217\,240 & *110\,670\,755 \\
            \emph{Reddit}      & 5\,279\,069 & 116\,029\,037 & 43\,067\,563 & 96\,659\,109 & *86\,758\,743\,921 & *901\,446\,625 \\
            \bottomrule
    \end{tabular}}
\end{table}

\subsection{Comparing Weighted and Non-weighted STC and STC+ (Q1)}

First, we count the number of strong edges and the mean edge weight of strong edges of the first five data sets. \emph{StackOverflow} and \emph{Reddit} are too large for the direct computation. 
We use the contact frequency as the weighting function for the aggregated networks.
\Cref{table:strongpercentage} shows the percentage of strong edges computed using the different algorithms.
The exact computation for \emph{Enron} and \emph{Yahoo} could not be finished within the given time limit.
For the remaining data sets, we observe for the exact solutions that the number of strong edges in the non-weighted case is higher than for
the weighted case. This is expected, as for edge weights of at least one, the number of strong edges in the 
non-weighted STC is an upper bound for the number of strong edges in the weighted STC.
However, when we look at the quality of the STC by considering how the strong edge weights compare to the empirical strength of the connections, we see the benefits of our new approach. 
\begin{table}[htb]
    \caption{Comparison of the weighted and non-weighted STC (OOT---out of time)}
    \begin{subtable}{\linewidth}
        \centering
        \caption{Percentage of strong edges in aggregated graph.} 
        \label{table:strongpercentage}
        \resizebox{0.9\linewidth}{!}{ 	\renewcommand{\arraystretch}{1}\setlength{\tabcolsep}{5mm}
            \begin{tabular}{lccccc}
                \toprule
                \multirow{3}{0.5cm}{\vspace*{4pt}\textbf{Data~set}\vspace*{4pt}}&\multicolumn{2}{c}{\textbf{Weighted}}&\multicolumn{3}{c}{\textbf{Non-weighted}}\\
                \cmidrule(lr){2-3} \cmidrule(lr){4-6}  
                & \texttt{ExactW} & \texttt{Pricing} & \texttt{ExactNw} & \texttt{Matching} & \texttt{HighDeg} \\ \midrule
                \emph{Malawi}     & 30.83 & 29.97 &37.75  & 27.38 & 36.31 \\ 
                \emph{Copresence} &31.12  & 21.37 &37.95  & 29.20 & 35.31 \\ 
                \emph{Primary}    &27.17  & 21.94 &27.83  & 18.99 & 27.35 \\  
                \emph{Enron}      & OOT   & 2.75  & OOT   & 3.28  & 4.61  \\ 
                \emph{Yahoo}      & OOT   & 9.86  & OOT   & 9.98  & 14.29 \\ 
                \bottomrule
            \end{tabular}
        }
    \end{subtable}
    \begin{subtable}{\linewidth}
        \centering
        \caption{Mean edge weights.} 
        \label{table:meanweights}
        \resizebox{1\linewidth}{!}{ 	\renewcommand{\arraystretch}{1}\setlength{\tabcolsep}{2.5mm}
            \begin{tabular}{lrrrrrrrrrr}
                \toprule
                &\multicolumn{4}{c}{\textbf{Weighted}}&\multicolumn{6}{c}{\textbf{Non-weighted}}\\
                \cmidrule(lr){2-5} \cmidrule(lr){6-11} 
                \multirow{3}{0.5cm}{\vspace*{4pt}\textbf{Data~set}\vspace*{4pt}}&\multicolumn{2}{c}{\texttt{ExactW}}&\multicolumn{2}{c}{\texttt{Pricing}}&\multicolumn{2}{c}{\texttt{ExactNw}}&\multicolumn{2}{c}{\texttt{Matching}}&\multicolumn{2}{c}{\texttt{HighDeg}}\\

                \textbf{ }        & Weak  & Strong & Weak   & Strong & Weak   & Strong & Weak   & Strong & Weak   & Strong \\ \midrule
                \emph{Malawi}     & 23.87 & 902.46 &  24.40 & 926.58 & 218.08 & 421.27 & 255.33 & 399.48 & 242.84 & 385.92 \\
                \emph{Copresence} & 20.30 & 78.32  &  46.13 & 189.31 &  27.22 &  56.56 & 58.85  & 120.07 & 57.13  & 112.63 \\
                \emph{Primary}    &  2.73 & 20.48  &   6.58 & 45.50  &   3.34 &  18.49 & 9.32   & 39.88  & 6.19   & 38.84  \\
                \emph{Enron}      &  OOT  &   OOT  &   3.69 & 9.33   &   OOT  &   OOT  & 3.77   & 6.01   & 3.76   & 5.50   \\
                \emph{Yahoo}      &  OOT  &   OOT  &   4.37 & 14.23  &   OOT  &   OOT  & 4.78   & 10.42  & 4.60   & 9.84   \\ 
                
                \bottomrule
            \end{tabular}
        }
    \end{subtable}
\end{table}
An STC labeling with strong edges with high average weights and weak edges with low average weights is favorable. 
The mean weights of the strong and weak edges are shown in \Cref{table:meanweights}.
\texttt{Pricing} leads to the highest mean edge weight for strong edges in almost all data sets.
The mean weight of the strong edges for the exact methods is always significantly higher for \texttt{ExactW} than \texttt{ExactNw}.
The reason is that \texttt{ExactNw} does not consider the edge weights.
Furthermore, it shows the effectiveness of our approach and indicates that the empirical a priori knowledge given by the edge weights is successfully captured by the weighted STC.
To further verify this claim, we evaluated how many of the highest-weight edges are classified as strong.
To this end, we computed the \emph{precision} and \emph{recall} for the top-$100$ weighted edges in the aggregated graph and the set of strong edges.
Let $H$ be the set of edges with the top-$100$ highest degrees. 
The precision is defined as $p=|H\cap S|/|S|$ and the recall as $r=|H\cap S|/|H|$. 
\Cref{fig:precreca} shows the results. Note that the $y$-axis of precision uses a logarithmic scale. The results show that the algorithms for the weighted STC lead to higher precision and recall values for all data sets.

\begin{figure}
    \centering
    \hspace{12mm}\includegraphics[width=.6\textwidth]{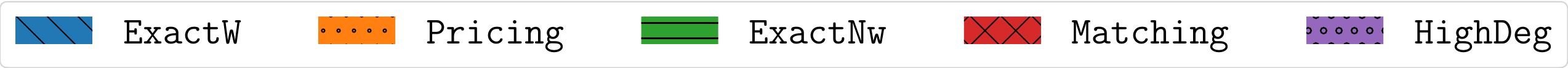}
    \includegraphics[width=.39\linewidth]{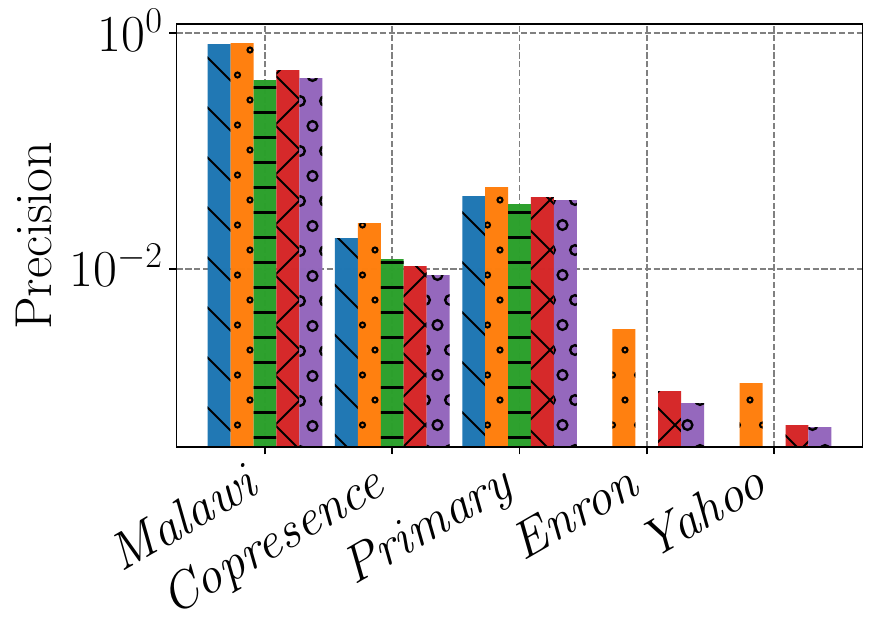}
    \includegraphics[width=.39\linewidth]{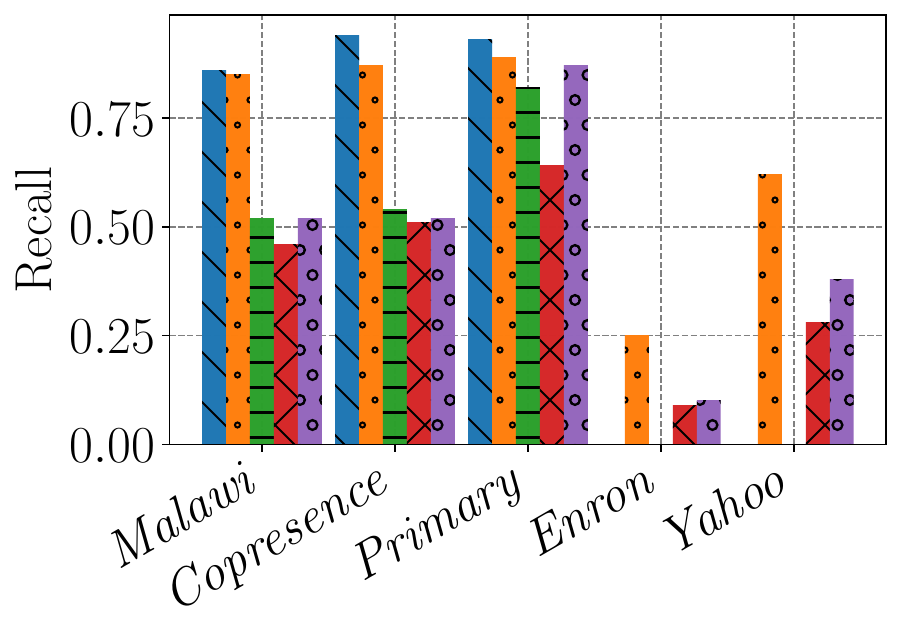}
    \caption{Precision and recall for classifying the top-$100$ highest weighted edges in the aggregated graph as strong edges using the algorithms for the STC. The $y$-axis of precision is logarithmic.}
    \label{fig:precreca}
\end{figure}

Similarly to the comparison of the weighted and unweighted STC, we count the number of strong edges and the mean edge weight of strong edges of the first five data sets using the algorithms for the STC+. 
We set the weighting parameter $\alpha=0.5$ for newly inserted edges in the weighted version.
Due to the increased number of variables, \texttt{ExactNw+} cannot finish the computation for \emph{Primary}.
\Cref{table:strongpercentage_stcplus} shows the percentage of strong edges of the number of edges in the input graph (i.e., not including newly inserted edges) computed using the different algorithms. 
For all data sets and algorithms, there are more strong edges compared to the standard STC variant, where the increase is strongest for \emph{Copresence}. The reason is that by inserting additional weak edges, the number of strong edges can be increased.
The unweighted \texttt{Matching+} approximation leads to slightly more strong edges compared to the weighted \texttt{Pricing+} algorithm because the latter tries to minimize the weight of the weak edges and the former the number of weak edges. %
Moreover, we compare the quality of the STC+ by considering the mean weights of weak and strong edges 
shown in \Cref{table:meanweights_stcplus}.
Compared to STC, the  mean weights of the strong edges can be lower because more strong edges are included in the solution. 
However, for the \emph{Copresence} network, the mean weights of the strong edges are higher for exact solutions.
Similarly to the STC variant, the weighted STC+ approaches, \texttt{ExactW+} and \texttt{Pricing+}, lead to higher quality solutions with higher mean edge weights for the strong edges and lower mean edge weights for the weak edges.

\begin{table}[htb]
    \caption{Comparison of the weighted and non-weighted STC+ (OOT---out of time)}
    \begin{subtable}{\linewidth}
        \centering
        \caption{Percentage of strong edges in aggregated graph.} 
        \label{table:strongpercentage_stcplus}
        \resizebox{0.9\linewidth}{!}{ 	\renewcommand{\arraystretch}{1}\setlength{\tabcolsep}{5mm}
            \begin{tabular}{lcccc}
                \toprule
                \multirow{3}{0.5cm}{\vspace*{4pt}\textbf{Data~set}\vspace*{4pt}}&\multicolumn{2}{c}{\textbf{Weighted}}&\multicolumn{2}{c}{\textbf{Non-weighted}}\\
                \cmidrule(lr){2-3} \cmidrule(lr){4-5}  
                & \texttt{ExactW+} & \texttt{Pricing+} & \texttt{ExactNw+} & \texttt{Matching+} \\ \midrule
                \emph{Malawi}     & 31.70 & 31.12 & 50.72 & 34.29 \\ 
                \emph{Copresence} & 83.04 & 38.00 & 90.73 & 57.27 \\ 
                \emph{Primary}    & 37.39 & 26.46 &   OOT & 32.25 \\
                \emph{Enron}      &   OOT &  3.66 &   OOT &  5.57 \\ 
                \emph{Yahoo}      &   OOT & 12.35 &   OOT & 14.03 \\ 
                \bottomrule
            \end{tabular}
        }
    \end{subtable}
    \begin{subtable}{\linewidth}
        \centering
        \caption{Mean edge weights.} 
        \label{table:meanweights_stcplus}
        \resizebox{1\linewidth}{!}{ 	\renewcommand{\arraystretch}{1}\setlength{\tabcolsep}{2.5mm}
            \begin{tabular}{lrrrrrrrr}
                \toprule
                &\multicolumn{4}{c}{\textbf{Weighted}}&\multicolumn{4}{c}{\textbf{Non-weighted}}\\
                \cmidrule(lr){2-5} \cmidrule(lr){6-9} 
                \multirow{3}{0.5cm}{\vspace*{4pt}\textbf{Data~set}\vspace*{4pt}}&\multicolumn{2}{c}{\texttt{ExactW+}}&\multicolumn{2}{c}{\texttt{Pricing+}}&\multicolumn{2}{c}{\texttt{ExactNw+}}&\multicolumn{2}{c}{\texttt{Matching+}}\\

                \textbf{ }        & Weak  & Strong & Weak   & Strong & Weak   & Strong & Weak   & Strong \\ \midrule
                \emph{Malawi}     & 21.33 & 883.97 & 18.61 & 905.97 & 242.02 & 343.19 & 198.56 & 479.16 \\
                \emph{Copresence} & 27.93 & 86.69 & 31.17 & 151.02 & 38.75 & 80.60 &  46.68 & 99.13 \\
                \emph{Primary}    & 4.83 & 32.36 &  5.35 &  42.24 & OOT & OOT &  8.52 & 28.98 \\
                \emph{Enron}      & OOT & OOT &  3.63 &   9.31 & OOT & OOT &   3.77 & 5.11 \\
                \emph{Yahoo}      & OOT & OOT &  4.15 &  13.68 & OOT & OOT &   4.53 & 10.21 \\ 
                
                \bottomrule
            \end{tabular}
        }
    \end{subtable}
\end{table}

\subsection{Impact of Parameter $\alpha$ on the Weighted STC+ (Q2)}

We computed the exact weighted STC+ using \texttt{ExactW+} for $\alpha\in\{0.001,0.01,0.1,0.5,0.75,0.9\}$ and the \emph{Malawi}, \emph{Copresence}, and \emph{Primary} data sets. 
\Cref{fig:alpha} shows the ratio of strong edges and the normalized mean edge weights of the strong edges.
As expected, for lower values of $\alpha$,  more edges can be classified as strong because more wedges can be closed by adding weak edges. Due to the higher number of strong edges the mean edge weight decreases due to the inclusion of strong edges with low weight. For increasing $\alpha$, the number of strong edges decreases, and the mean edge weight increases.
Therefore, the $\alpha$ parameter can be used to adjust the number of strong edges in a trade-off with the mean edge weight of strong edges.
\begin{figure}
    \centering
    \begin{subfigure}{0.3\linewidth}
        \centering
        \includegraphics[width=1\linewidth]{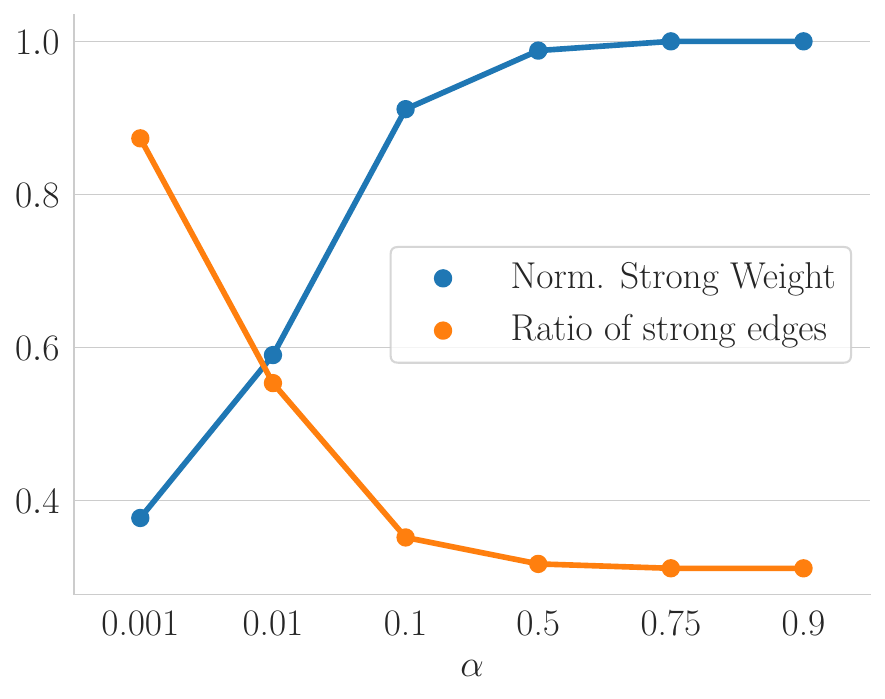}
        \caption{\emph{Malawi}}
        \label{fig:alphaa}
    \end{subfigure}%
    \begin{subfigure}{0.3\linewidth}
        \centering
        \includegraphics[width=1\linewidth]{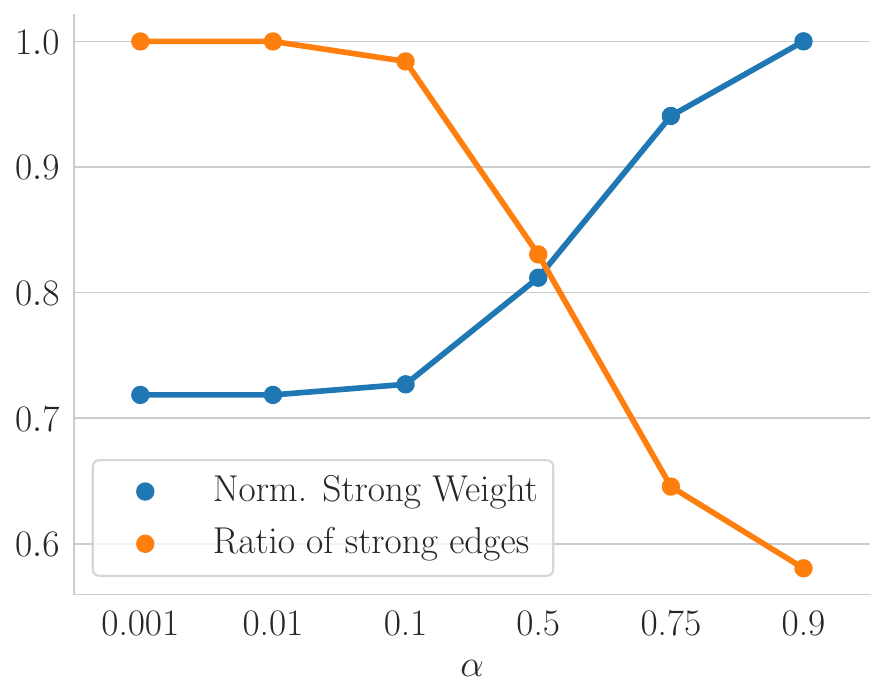}
        \caption{\emph{Copresence}}
        \label{fig:alphab}
    \end{subfigure}%
    \begin{subfigure}{0.3\linewidth}
        \centering
        \includegraphics[width=1\linewidth]{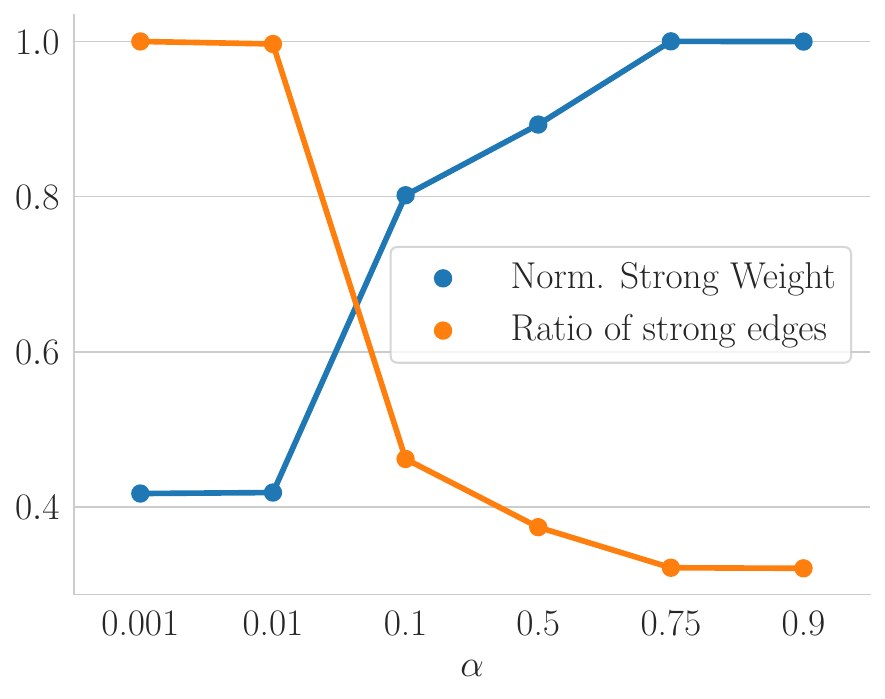}
        \caption{\emph{Primary}}
        \label{fig:alphac}
    \end{subfigure}
    \caption{Effect of $\alpha$ on the ratio of strong edges and the normalizes sum of the weight of strong edges.}
    \label{fig:alpha}
\end{figure}

\subsection{Efficiency of the Streaming Algorithm (Q3)}
In order to evaluate our streaming algorithm, we measured the 
running times on the \emph{Enron, Yahoo, StackOverflow}, and \emph{Reddit} data sets with time window sizes~$\Delta$ of one hour, one day, and one week, respectively.
\Cref{table:stcovertime} shows the results.
In almost all cases, our streaming algorithm \texttt{DynAppr} beats the baseline \texttt{STCtime} with running times that are often orders of magnitudes faster.
The reason is that \texttt{STCtime} uses the non-dynamic pricing approximation, which needs to consider all edges of the current wedge graph in each time window.
Hence, the baseline is often not able to finish the computations within the given time limit, i.e., for seven of the twelve experiments, it runs out of time.
The only case in which the baseline is faster than \texttt{DynAppr} is for the \emph{Enron} data set and a time window size of one hour. Here, the computed wedge graphs of the time windows are, on average, very small (see \Cref{fig:sizesa}), and 
the dynamic algorithm can not make up for its additional complexity due to calling \Cref{alg:pricing}.
However, we also see for \emph{Enron} that for larger time windows, the running times of the baseline strongly increase, and for \texttt{DynAppr}, the increase is slight. 
In general, the number of vertices and edges in the wedge graphs increases with larger time window sizes $\Delta$. Hence, the running times increase for both algorithms with increasing $\Delta$. \Cref{fig:sizes} shows the sizes of the number of vertices $|V(W)|$ and edges $|E(W)|$ in the wedge graphs computed for the time windows of size $\Delta$.
The sizes increase with increasing $\Delta$ because more contacts happen in longer windows.

\begin{figure}[htb]
    \centering
    \begin{subfigure}{0.49\linewidth}
        \includegraphics[width=0.48\linewidth]{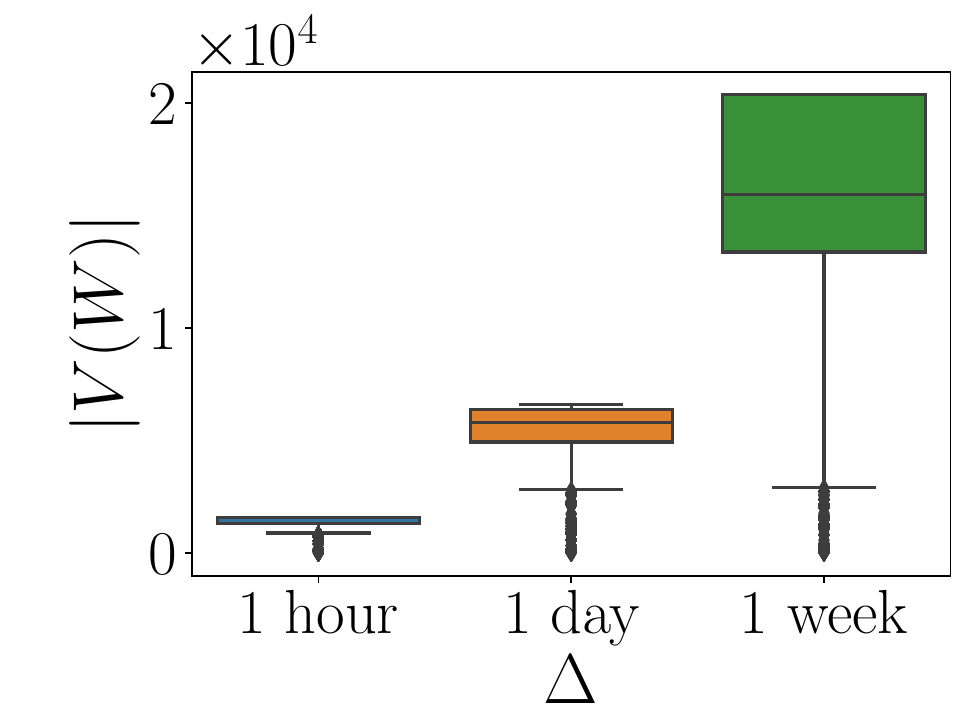}\hfill%
        \includegraphics[width=0.48\linewidth]{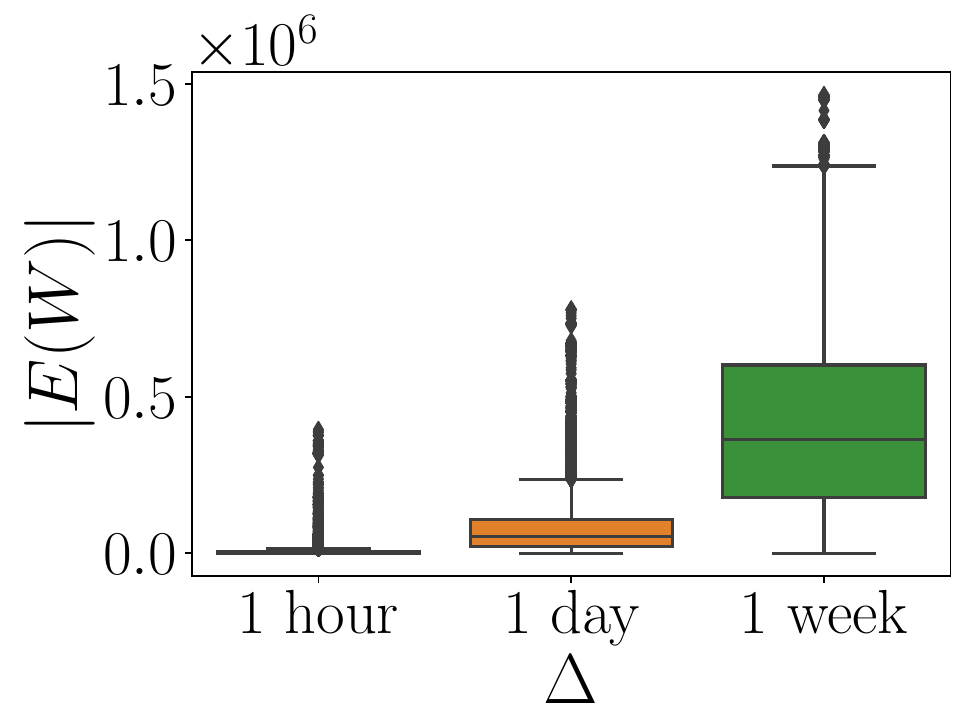}
        \caption{\emph{Enron}}
        \label{fig:sizesa}
    \end{subfigure}\hfill%
    \begin{subfigure}{0.49\linewidth}
        \includegraphics[width=0.48\linewidth]{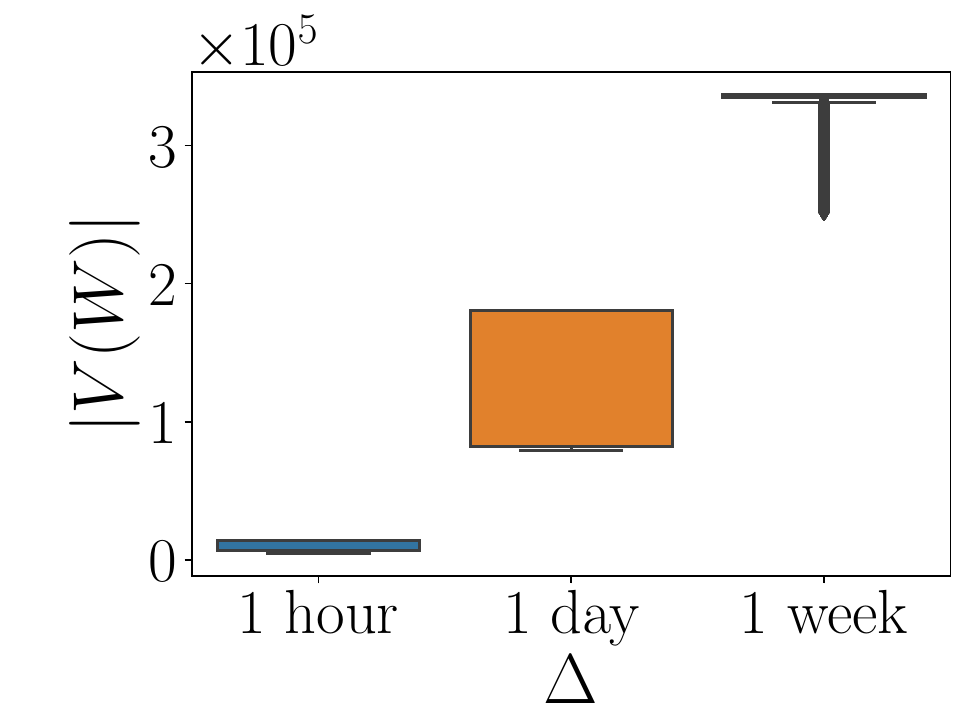}\hfill%
        \includegraphics[width=0.48\linewidth]{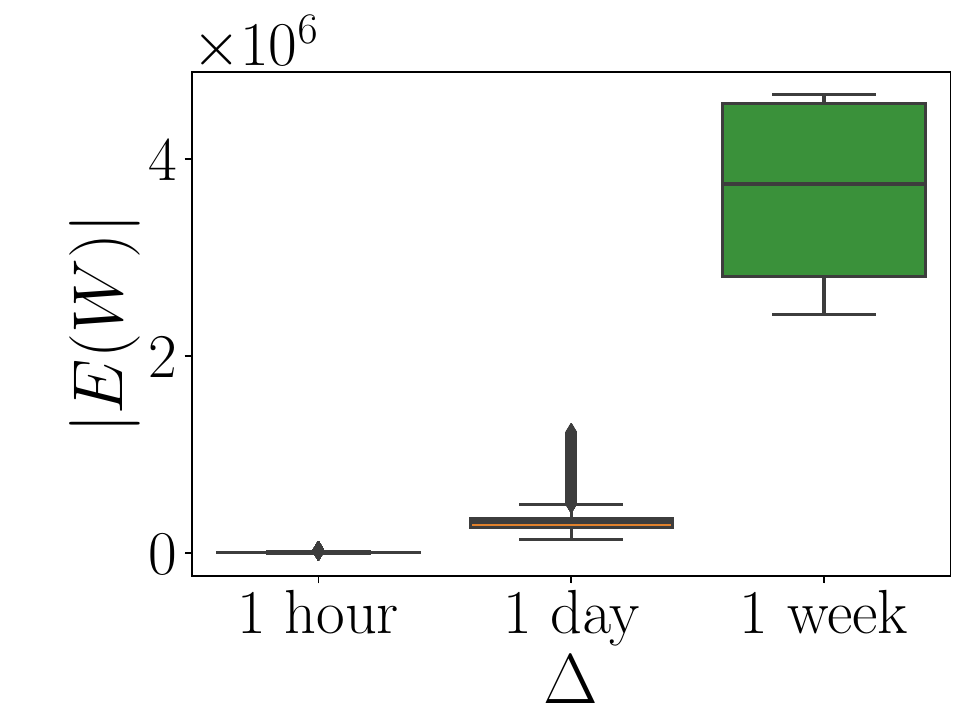}
        \caption{\emph{Yahoo}}
    \end{subfigure}
    \begin{subfigure}{0.49\linewidth}
        \includegraphics[width=0.48\linewidth]{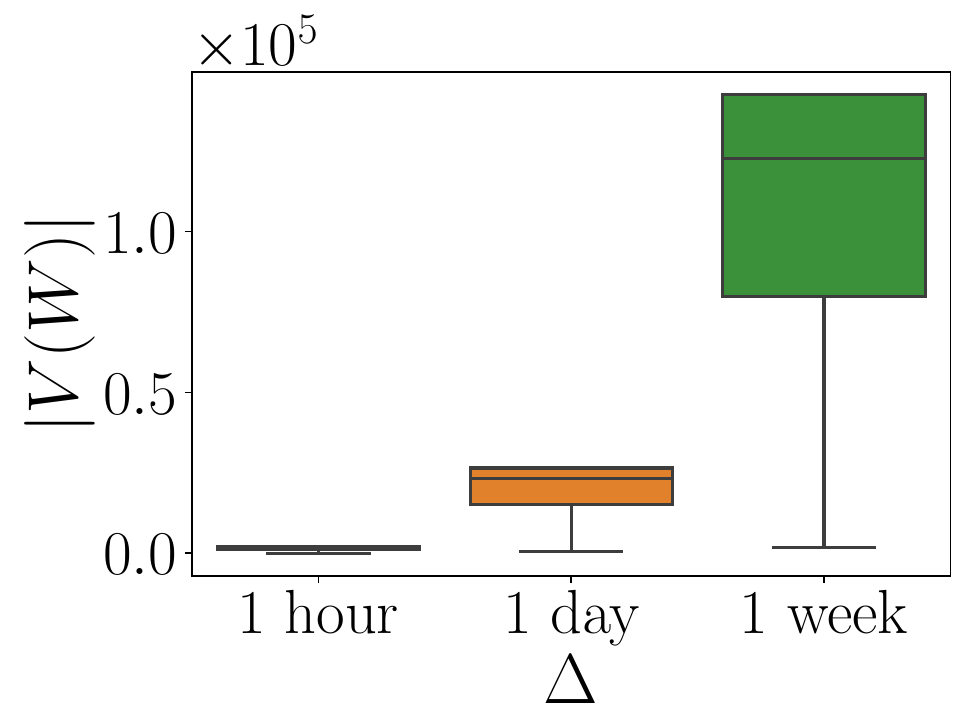}\hfill%
        \includegraphics[width=0.48\linewidth]{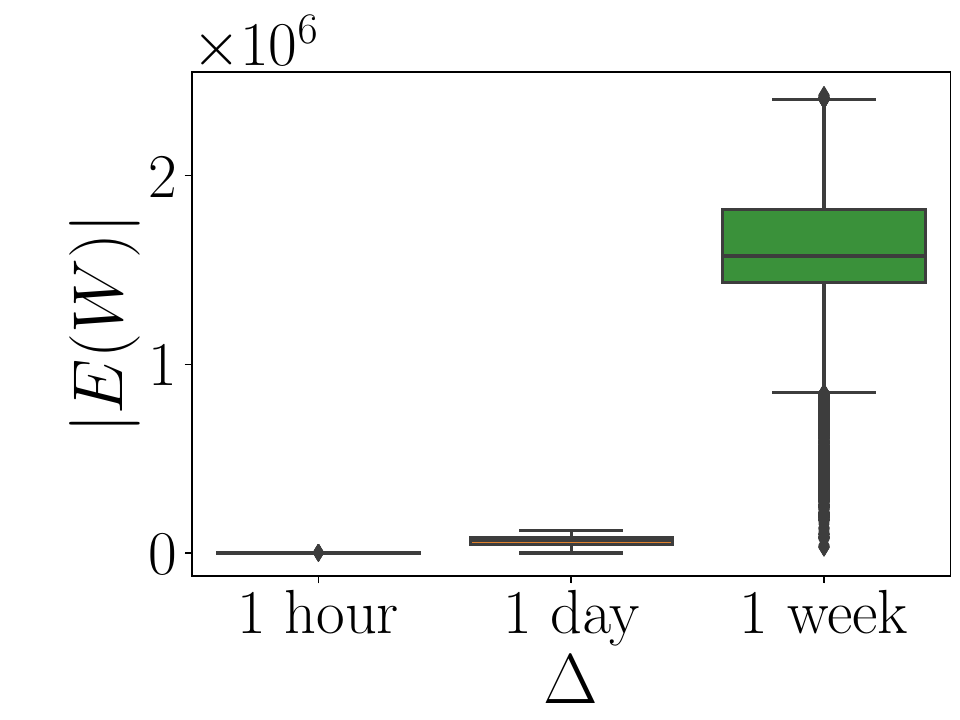}
        \caption{\emph{StackOverflow}}
    \end{subfigure}\hfill%
    \begin{subfigure}{0.49\linewidth}
        \includegraphics[width=0.48\linewidth]{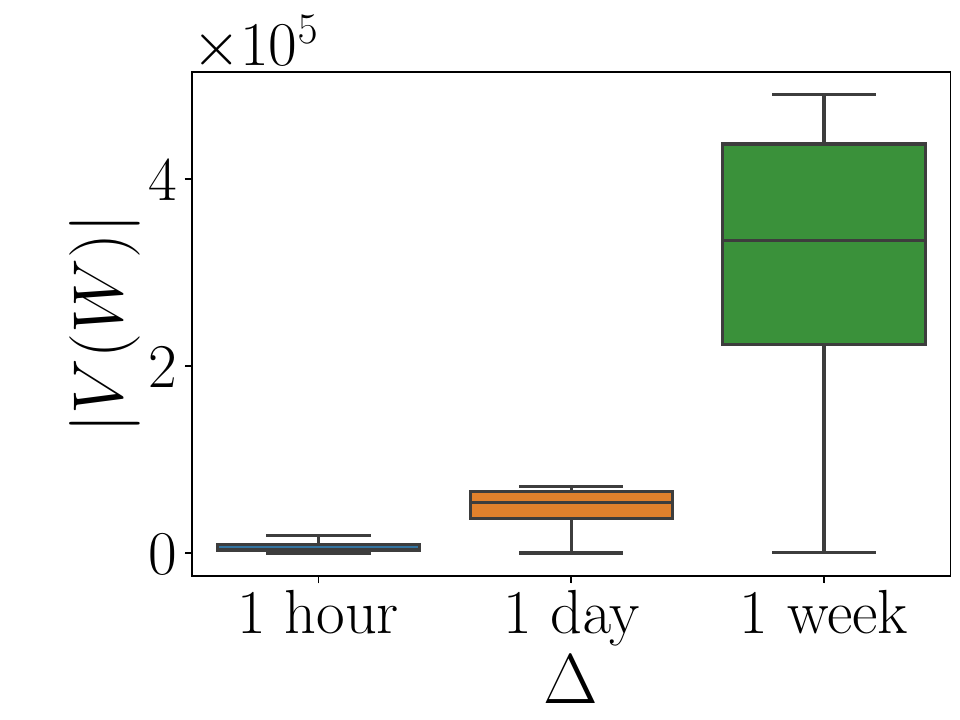}\hfill%
        \includegraphics[width=0.48\linewidth]{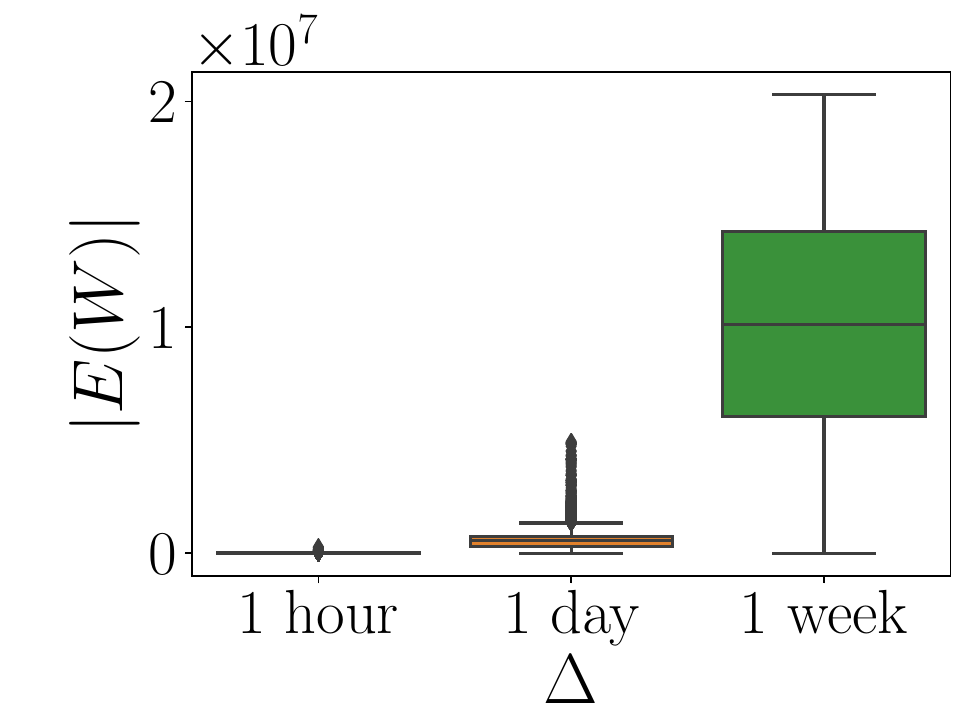}
        \caption{\emph{Reddit}}
    \end{subfigure}
    
    \caption{Boxplots for the sizes of the number of vertices $|V(W)|$ and edges $|E(W)|$ in the wedge graphs computed for the time windows of size $\Delta$.}
    \label{fig:sizes}
\end{figure}
In the case of \emph{Reddit} and a time window size of one week, the sizes of the wedge graphs are too large to compute all solutions within the time limit, even for \textsc{DynAppr}.

\begin{table}[htb]
    \centering
    \caption{Running times in seconds of the streaming alg. (OOT---out of time).} 
    \label{table:stcovertime}
    \resizebox{1\linewidth}{!}{ 	\renewcommand{\arraystretch}{1}\setlength{\tabcolsep}{5mm}
        \begin{tabular}{lrrrrrrrr}
            \toprule
            \multirow{3}{0.5cm}{\vspace*{4pt}\textbf{Data~set}\vspace*{4pt}}&\multicolumn{2}{c}{$\Delta=1$ hour}&\multicolumn{2}{c}{$\Delta=1$ day} & \multicolumn{2}{c}{$\Delta=1$ week}\\
            \cmidrule(lr){2-3} \cmidrule(lr){4-5} \cmidrule(lr){6-7} \cmidrule(lr){8-9}  
            
            \textbf{ }     & \texttt{DynAppr} & \texttt{STCtime} & \texttt{DynAppr} & \texttt{STCtime} & \texttt{DynAppr} & \texttt{STCtime} \\ 
            
            \midrule
            \emph{Enron}         & 264.74 & \textbf{89.18} & \textbf{306.13} & 1\,606.09 & \textbf{352.01} & 20\,870.77 \\
            \emph{Yahoo}         & \textbf{15.99} & 767.40 & \textbf{91.46} & OOT & \textbf{144.52} & OOT \\ 
            \emph{StackOverflow} & \textbf{170.38} & 2\,298.58 & \textbf{971.22} & OOT & \textbf{16\,461.53} & OOT \\
            \emph{Reddit}        & \textbf{1\,254.66} & 13\,244.84 & \textbf{37\,627.79} & OOT & OOT & OOT \\
            \bottomrule
        \end{tabular}
    }
\end{table}

\section{Conclusion and Future Work}\label{sec:conclusion}
We generalized the STC and STC+ to weighted versions to include a priori knowledge in the form of edge weights representing empirical tie strength.
We applied our new STC variants to temporal networks and showed that we obtained meaningful results.
Our main contribution is our 2-approximation (3-approximation) streaming algorithm for the weighted STC (STC+, respectively) in temporal networks. We empirically validated its efficiency in our evaluation.
Furthermore, we introduced a fully dynamic $k$-approximation of the MWVC problem in hypergraphs with $k$-uniform hyperedges that allows efficient updates as part of our streaming algorithm. %

As an extension of this work, a discussion of further variants of the STC or STC+ can be interesting. 
For example,~\cite{sintos2014using} introduced a variant with multiple relationship types. 
Efficient streaming algorithms for weighted versions of this variant are planned as future work.

%\section*{Declarations}
%\noindent
%\textbf{Conflict of interest: }The authors have no competing interests to declare that are relevant to the content of this article.

%
%\bibliography{literature}

\end{document}